\DeclareFontFamily{U}{shuffle}{}
\DeclareFontShape{U}{shuffle}{m}{n}{<-8>shuffle7 <8->shuffle10}{}
\newcommand{\StVEsntl}{\mathbf{E}}
\newcommand{\V}{\FMVariety{V}}
\newcommand{\W}{\FMVariety{W}}
\newcommand{\FSVVjoinLI}{\V \FSVjoin \FSVLI}
\title{A Note on the Join of Varieties of Monoids with \texorpdfstring{$\FSVLI$}{LI}} %
\author{Nathan Grosshans}
       {Fachbereich Elektrotechnik/Informatik, University of Kassel, Germany \and \url{https://nathan.grosshans.me}}
       {nathan.grosshans@polytechnique.edu}
       {https://orcid.org/0000-0003-3400-1098}
       {}%
\authorrunning{N. Grosshans} %
\keywords{Varieties of monoids, join, LI} %
\begin{document}

\maketitle

\begin{abstract}
In this note, we give a characterisation in terms of identities of the join of
$\V$ with the variety of finite locally trivial semigroups $\FSVLI$ for several
well-known varieties of finite monoids $\V$ by using classical
algebraic-automata-theoretic techniques. To achieve this, we use the new notion
of essentially-$\V$ stamps defined by Grosshans, McKenzie and Segoufin and show
that it actually coincides with the join of $\V$ and $\FSVLI$ precisely when
some natural condition on the variety of languages corresponding to $\V$ is
verified.

This work is a kind of rediscovery of the work of J.\ C.\ Costa around 20 years
ago from a rather different angle, since Costa's work relies on the use of
advanced developments in profinite topology, whereas what is presented here
essentially uses an algebraic, language-based approach.
 \end{abstract}

\section{Introduction}

One of the most fundamental problems in finite automata theory is the one of
\emph{characterisation}: given some subclass of the class of regular languages,
find out whether there is a way to characterise those languages using some class
of finite objects. This problem is often linked to and motivated by the problem
of \emph{decidability}: given some subclass of the class of regular languages,
find out whether there exists an algorithm testing the membership of any regular
language in that subclass.
The obvious approach to try to find a characterisation of a class of regular
languages would be to look for properties shared by all the minimal finite
automata of those languages. If we find such characterising properties, we can
then ask whether they can be checked by an algorithm to answer the problem of
decidability for this class of languages.
However, one of the most fruitful approaches of those two problems has been the
\emph{algebraic approach}, in which we basically replace automata with morphisms
into monoids: a language $L$ over an alphabet $\Sigma$ is then said to be
recognised by a morphism $\varphi$ into a monoid $M$ if and only if $L$ is the
inverse image by $\varphi$ of a subset of $M$. Under this notion of recognition,
each language has a minimal morphism recognising it, the \emph{syntatic
morphism} into the \emph{syntactic monoid} of that language, that are minimal
under some notion of division. The fundamental result on which this algebraic
approach relies is that a language is regular if and only if its syntactic
monoid is finite.
One can thus try to find a characterisation of some class of regular languages
by looking at the algebraic properties of the syntactic monoids of these
languages.

And many such characterisations that are decidable were indeed successfully
obtained since Schützenberger's seminal work in
1965~\cite{Schuetzenberger-1965}. His famous result, that really started the
field of \emph{algebraic automata theory}, states that the star-free regular
languages are exactly those whose syntactic monoids are finite and aperiodic.
Another important early result in that vein is the one of
Simon~\cite{Simon-1975} characterising the piecewise testable languages as
exactly those having a finite $\GreenJsymb$-trivial syntactic monoid.
Eilenberg~\cite{Books/Eilenberg-1976} was the first to prove that such algebraic
characterisations actually come as specific instances of a general bijective
correspondence between varieties of finite monoids and varieties of languages
-- classes of, respectively, finite monoids and regular languages closed under
natural operations. Thus, a class of regular languages can indeed be
characterised by the syntactic monoids of these languages, as soon as it
verifies some nice closure properties.
Eilenberg's result was later completed by Reiterman's
theorem~\cite{Reiterman-1982}, that uses a notion of identities defined using
profinite topology and states that a class of finite monoids is a variety of
finite monoids if and only if it is defined by a set of profinite identities.
Therefore, one can always characterise the variety of finite monoids associated
to a variety of languages by a set of profinite identities and, additionally,
this characterisation often leads to decidability, especially when this set is
finite.
A great deal of research works have been conducted to characterise varieties of
finite monoids or semigroups by profinite identities (see the book of
Almeida~\cite{Books/Almeida-1995} for an overview; see also the book chapter by
Pin~\cite{Pin-1997} for more emphasis on the ``language'' part).

A kind of varieties of finite monoids or semigroups that has attracted many
research efforts aiming for characterisations through identities are the
varieties defined as the join of two other varieties. Given two varieties of
finite monoids $\V$ and $\W$, the \emph{join of $\V$ and $\W$}, denoted by
$\V \FSVjoin \W$, is the least variety of finite monoids containing both $\V$
and $\W$. One of the main motivations to try to understand $\V \FSVjoin \W$ is
that the variety of languages corresponding to it by the Eilenberg
correspondence, $\DLang{\V \FSVjoin \W}$, is the one obtained by considering
direct products of automata recognising languages from both $\DLang{\V}$ and
$\DLang{\W}$, the varieties of languages corresponding to, respectively, $\V$
and $\W$. This is a fundamental operation on automata, and while it is
straightforward that $\DLang{\V \FSVjoin \W}$ is simply the least variety of
languages containing both $\DLang{\V}$ and $\DLang{\W}$, this does not at all
furnish a decidable characterisation of $\DLang{\V \FSVjoin \W}$, let alone a
set of identities defining $\V \FSVjoin \W$.
Generally speaking, the problem of finding a set of identities defining
$\V \FSVjoin \W$ is difficult (see~\cite{Books/Almeida-1995, Zeitoun-1996}): in
fact, there exist two varieties of finite semigroups that have a decidable
membership problem but whose join has an undecidable membership
problem~\cite{Albert-Baldinger-Rhodes-1992}.
However, sets of identities have been found for many specific joins: have a look
at~\cite{Almeida-1988, Almeida-Azevedo-1989, Azevedo-1990, Zeitoun-1995a,
Zeitoun-1995b, Azevedo-Zeitoun-1998, Costa-2001, Costa-2002} for some examples.

In this paper, we give a general method to find a set of identities defining the
join of an arbitrary variety of finite monoids $\V$ and the \emph{variety of
finite locally trivial semigroups $\FSVLI$}, as soon as one has a set of
identities defining $\V$ and $\V$ verifies some criterion. Joins of that sort
have been studied quite a lot in the literature we mentioned in the previous
paragraph (e.g.\ in~\cite{Azevedo-1990, Zeitoun-1995b, Costa-2001, Costa-2002}),
but while these works usually rely heavily on profinite topology with
some in-depth understanding of the structure of the elements of the so-called
free pro-$\V$ monoids and free pro-$\FSVLI$ semigroups, we present a method that
reduces the use of profinite topology to the minimum and that relies mainly on
algebraic and language-theoretic techniques.
The variety $\FSVLI$ is well-known to correspond to the class of languages for
which membership only depends on bounded-length prefixes and suffixes of words.
In~\cite{Grosshans-McKenzie-Segoufin-2021}, McKenzie, Segoufin and the author
introduced the notion of \emph{essentially-$\V$} stamps (surjective morphisms
$\varphi\colon \Sigma^* \to M$ for $\Sigma$ an alphabet and $M$ a finite monoid)
to characterise the built-in ability that programs over monoids in $\V$ have to
treat separately some constant-length beginning and ending of a word. Informally
said, a stamp is essentially-$\V$ when it behaves like a stamp into a monoid of
$\V$ as soon as a sufficiently long beginning and ending of the input word has
been fixed.
Our method builds on two results, that we prove in this article.
\begin{enumerate}
    \item
    \label{intro_result_1}
	The first result is a characterisation in terms of identities of the
	class $\StVEsntl\V$ of essentially-$\V$ stamps given a set of identities
	$E$ defining $\V$: a stamp is in $\StVEsntl\V$ if and only if it
	satisfies all identities
	$x^\omega y u z t^\omega = x^\omega y v z t^\omega$ for $u = v$ an
	identity in $E$ and where $x, y, z, t$ do appear neither in $u$ nor in
	$v$.
    \item
    \label{intro_result_2}
	The second result says that $\StVEsntl\V$ and $\FSVVjoinLI$ do coincide
	if and only if $\V$ verifies some criterion, that can be formulated in
	terms of quotient-expressibility in $\DLang{\V}$: any language
	$L \in \DLang{\V}$ must, for an arbitrary choice of $x, y$, be such that
	the quotient $u^{-1} L v^{-1}$ for $u$ and $v$ long enough can be
	expressed as the quotient $(x u)^{-1} K (v y)^{-1}$ for a
	$K \in \DLang{\V}$.
\end{enumerate}
Using these results, we can find a set of identities defining $\FSVVjoinLI$ as
soon as a set of identities defining $\V$ is known by proving that $\V$ verifies
the criterion in point~\ref{intro_result_2}. Note that actually, for technical
reasons, we work with the so-called \nem-variety of stamps corresponding to
$\FSVVjoinLI$ rather than directly with the variety of finite semigroups
$\FSVVjoinLI$, but this is not a problem since a variety of finite semigroups
can always be seen as an \nem-variety of stamps and vice versa.
We apply this method to reprove characterisations of the join of $\FSVLI$ with
each of the well-known varieties of finite monoids $\FMVR$, $\FMVL$, $\FMVJ$ and
any variety of finite groups.

The author noticed after proving those results that his work actually forms a
kind of rediscovery of the work of J.\ C.\ Costa in~\cite{Costa-2001}. He
defines an operator $U$ associating to each set of identities $E$ the exact same
new set $U(E)$ of identities as in point~\ref{intro_result_1}. Costa then
defines a property of cancellation for varieties of finite semigroups such that
for any $\V$ verifying it, $U(E)$ defines $\FSVVjoinLI$ for $E$ defining $\V$.
He finally uses this result to derive characterisations of $\FSVVjoinLI$ for all
the cases we are treating in our paper and many more.

What is, then, the contribution of our article? In a nutshell, it does mainly
use algebraic and language-theoretic techniques while Costa's work relies
heavily on profinite topology. In our setting, once the stage is set, all proofs
are quite straightforward without real difficulties and rely on classical
language-theoretic characterisations of the varieties under consideration. This
is to contrast with Costa's work, that for instance draws upon the difficult
analysis of the elements of free pro-$\FMVR$ monoids by Almeida and
Weil~\cite{Almeida-Weil-1997} to characterise $\FMVR \FSVjoin \FSVLI$.

\subparagraph{Organisation of the article.}
Section~\ref{sec:Preliminaries} is dedicated to the necessary preliminaries. In
Section~\ref{sec:Essentially-V}, we recall the definition of essentially-$\V$
stamps and prove the characterisation by identities of
point~\ref{intro_result_1} above.
Section~\ref{sec:Essentially-V_and_join_with_LI} is then dedicated to the
necessary and sufficient criterion for $\StVEsntl\V$ and $\FSVVjoinLI$ to
coincide presented in point~\ref{intro_result_2} and finally those results are
applied to specific cases in Section~\ref{sec:Applications}. We finish with a
short conclusion.

\section{Preliminaries}
\label{sec:Preliminaries}

We briefly introduce the mathematical material used in this paper.
For the basics and the classical results of automata theory, we refer the reader
to the two classical references of the domain by
Eilenberg~\cite{Books/Eilenberg-1974, Books/Eilenberg-1976} and
Pin~\cite{Books/Pin-1986}. For definitions and results specific to varieties of
stamps and associated profinite identities, see the articles by
Straubing~\cite{Straubing-2002} and by Pin and
Straubing~\cite{Pin-Straubing-2005}.
We also assume some basic knowledge of topology.

\subparagraph*{General notations.}
Let $i \in \N$ be a natural number. We shall denote by $[i]$ the set of all
natural numbers $n \in \N$ verifying $1 \leq n \leq i$.

\subparagraph*{Words and languages.}
Let $\Sigma$ be a finite alphabet. We denote by $\Sigma^*$ the set of all finite
words over $\Sigma$. We also denote by $\Sigma^+$ the set of all finite non
empty words over $\Sigma$, the empty word being denoted by $\emptyword$. Our
alphabets and words will always be finite, without further mention of this fact.
Given a word $w \in \Sigma^*$, we denote its length by $\length{w}$ and the set
of letters it contains by $\alphabet(w)$. Given $n \in \N$, we denote by
$\Sigma^{\geq n}$, $\Sigma^n$ and $\Sigma^{< n}$ the set of words over $\Sigma$
of length, respectively, at least $n$, exactly $n$ and less than $n$.

A \emph{language over $\Sigma$} is a subset of $\Sigma^*$. A language is
\emph{regular} if it is recognised by a deterministic finite automaton.
The \emph{quotient of a language $L$ over $\Sigma$ relative to the words $u$
and $v$ over $\Sigma$} is the language, denoted by $u^{-1}Lv^{-1}$, of the words
$w$ such that $uwv \in L$.

\subparagraph*{Monoids, semigroups and varieties.}
A \emph{semigroup} is a non-empty set equipped with an associative law that we
will write multiplicatively. A \emph{monoid} is a semigroup with an identity. An
example of a semigroup is $\Sigma^+$, the free semigroup over $\Sigma$.
Similarly $\Sigma^*$ is the free monoid over $\Sigma$.
A \emph{morphism $\varphi$ from a semigroup $S$ to a semigroup $T$} is a
function from $S$ to $T$ such that $\varphi(x y) = \varphi(x) \varphi(y)$ for
all $x, y \in S$. A morphism of monoids additionally requires that the identity
is preserved.
A semigroup $T$ is a \emph{subsemigroup} of a semigroup $S$ if $T$ is a subset
of $S$ and is equipped with the restricted law of $S$. Additionally the notion
of submonoids requires the presence of the identity.
A semigroup $T$ \emph{divides} a semigroup $S$ if $T$ is the image by a
semigroup morphism of a subsemigroup of $S$. Division of monoids is defined in
the same way.
The \emph{Cartesian (or direct) product} of two semigroups is simply the
semigroup given by the Cartesian product of the two underlying sets equipped
with the Cartesian product of their laws.
An element $s$ of a semigroup is \emph{idempotent} if $ss=s$.

A \emph{variety of finite monoids} is a non-empty class of finite monoids closed
under Cartesian product and monoid division. A \emph{variety of finite
semigroups} is defined similarly. When dealing with varieties, we consider only
finite monoids and semigroups, so we will drop the adjective finite when talking
about varieties in the rest of this article.

\subparagraph*{Varieties of stamps.}
Let $f\colon \Sigma^* \to \Gamma^*$ be a morphism from the free monoid over an
alphabet $\Sigma$ to the free monoid over an alphabet $\Gamma$, that we might
call an \allm-morphism. We say that $f$ is an \emph{\nem-morphism} (non-erasing
morphism) whenever $f(\Sigma) \subseteq \Gamma^+$.

We call \emph{stamp} a surjective morphism $\varphi\colon \Sigma^* \to M$ for
$\Sigma$ an alphabet and $M$ a finite monoid.
We say that a stamp $\varphi\colon \Sigma^* \to M$ \emph{\allm-divides}
(respectively \emph{\nem-divides}) a stamp $\psi\colon \Gamma^* \to N$ whenever
there exists an \allm-morphism (respectively \nem-morphism)
$f\colon \Sigma^* \to \Gamma^*$ and a surjective morphism
$\alpha\colon \Img(\psi \circ f) \to M$ such that
$\varphi = \alpha \circ \psi \circ f$.
The \emph{direct product} of two stamps $\varphi\colon \Sigma^* \to M$ and
$\psi\colon \Sigma^* \to N$ is the stamp
$\varphi\times\psi\colon \Sigma^* \to K$ such that $K$ is the submonoid of
$M \times N$ generated by $\set{(\varphi(a), \psi(a)) \mid a \in \Sigma}$ and
$\varphi\times\psi(a) = (\varphi(a), \psi(a))$ for all $a \in \Sigma$.

An \emph{\allm-variety of stamps} (respectively \emph{\nem-variety of stamps})
is a non-empty class of stamps closed under direct product and \allm-division
(respectively \nem-division).

We will often use the following characteristic index of stamps, defined
in~\cite{Chaubard-Pin-Straubing-2006b}. Consider a stamp
$\varphi\colon \Sigma^* \to M$. As $M$ is finite there is a $k \in \N_{>0}$ such
that $\varphi(\Sigma^{2k}) = \varphi(\Sigma^k)$: this implies that
$\varphi(\Sigma^k)$ is a semigroup. The least such $k$ is called the
\emph{stability index} of $\varphi$.

\subparagraph*{Varieties of languages.}
A language $L$ over an alphabet $\Sigma$ is \emph{recognised by a monoid $M$} if
there is a morphism $\varphi\colon \Sigma^* \to M$ and $F \subseteq M$ such that
$L = \varphi^{-1}(F)$. We also say that \emph{$\varphi$ recognises~$L$}. It is
well known that a language is regular if and only if it is recognised by a
finite monoid.
The \emph{syntactic congruence} of $L$, denoted by $\sim_L$, is the equivalence
relation on $\Sigma^*$ defined by $u \sim_L v$ for $u, v \in \Sigma^*$ whenever
for all $x, y \in \Sigma^*$, $xuy \in L$ if and only if $xvy \in L$. The
quotient $\Sigma^* \quotient \sim_L$ is a monoid, called \emph{the syntactic
monoid of $L$}, that recognises $L$ via the \emph{syntactic morphism $\eta_L$ of
$L$} sending any word $u$ to its equivalence class $[u]_{\sim_L}$ for $\sim_L$.
A stamp $\varphi\colon \Sigma^* \to M$ recognises $L$ if and only if there
exists a surjective morphism $\varphi\colon M \to \Sigma^* \quotient \sim_L$
verifying $\eta_L = \alpha \circ \varphi$.

A \emph{class of languages} $\mathcal{C}$ is a correspondence that associates a
set $\mathcal{C}(\Sigma)$ to each alphabet $\Sigma$.
A \emph{(\allm-)variety of languages} (respectively an \emph{\nem-variety of
languages}) $\LVariety{V}$ is a non-empty class of regular languages closed
under Boolean operations, quotients and inverses of \allm-morphisms
(respectively \nem-morphisms).
A classical result of
Eilenberg~\cite[Chapter VII, Section 3]{Books/Eilenberg-1976} says that there is
a bijective correspondence between varieties of monoids and varieties of
languages: to each variety of monoids $\V$ we can bijectively associate
$\DLang{\V}$ the variety of languages whose syntactic monoids belong to $\V$.
This was generalised by Straubing~\cite{Straubing-2002} to varieties of stamps:
to each \allm-variety (respectively \nem-variety) of stamps $\V$ we can
bijectively associate $\DLang{\V}$ the \allm-variety (respectively \nem-variety)
of languages whose syntactic morphisms belong to $\V$.
Given two \allm-varieties (respectively \nem-varieties) of stamps
$\StVariety{V_1}$ and $\StVariety{V_2}$, we have
$\StVariety{V_1} \subseteq \StVariety{V_2} \Leftrightarrow
 \DLang{\StVariety{V_1}} \subseteq \DLang{\StVariety{V_2}}$.

For $\V$ a variety of monoids, we define $\StVGen{\V}{\allm}$ the \allm-variety
of all stamps $\varphi\colon \Sigma^* \to M$ such that $M \in \V$. Of course, in
that case $\DLang{\V} = \DLang{\StVGen{\V}{\allm}}$.
Similarly, for $\V$ a variety of semigroups, we define $\StVGen{\V}{\nem}$ the
\nem-variety of all stamps $\varphi\colon \Sigma^* \to M$ such that
$\varphi(\Sigma^+) \in \V$. In that case, we consider $\DLang{\V}$ to be the
\nem-variety of languages corresponding to $\StVGen{\V}{\nem}$.
The operations $\StVGen{\cdot}{\allm}$ and $\StVGen{\cdot}{\nem}$ form bijective
correspondences between varieties of monoids and \allm-varieties of stamps and
between varieties of semigroups and \nem-varieties of stamps, respectively
(see~\cite{Straubing-2002}).

\subparagraph{Identities.}
Let $\Sigma$ be an alphabet. Given $u, v \in \Sigma^*$, we set
\[
    r(u, v) = \min\set{\card{M} \mid
		       \exists \varphi\colon \Sigma^* \to M \text{ stamp s.t. }
		       \varphi(u) \neq \varphi(v)}
\]
and $d(u, v) = 2^{-r(u, v)}$, using the conventions that
$\min\emptyset = +\infty$ and $2^{-\infty} = 0$. Then $d$ is a metric on
$\Sigma^*$.
The completion of the metric space $(\Sigma^*, d)$, denoted by
$(\widehat{\Sigma^*}, \widehat{d})$, is a metric monoid called the \emph{free
profinite monoid on $\Sigma^*$}. Its elements are all the formal limits
$\lim_{n \to \infty} x_n$ of Cauchy sequences $(x_n)_{n \geq 0}$ in
$(\Sigma^*, d)$ and the metric $d$ on $\Sigma^*$ extends to a metric
$\widehat{d}$ on $\widehat{\Sigma^*}$ defined by
$\widehat{d}(\lim_{n \to \infty} x_n, \lim_{n \to \infty} y_n) =
 \lim_{n \to \infty} d(x_n, y_n)$
for Cauchy sequences $(x_n)_{n \geq 0}$ and $(y_n)_{n \geq 0}$ in
$(\Sigma^*, d)$.
Note that, when it is clear from the context, we usually do not make the metric
explicit when talking about a metric space.
One important example of elements of $\widehat{\Sigma^*}$ is given by the
elements $x^\omega = \lim_{n \to \infty} x^{n!}$ for all $x \in \Sigma^*$.

Every finite monoid $M$ is considered to be a complete metric space equipped
with the discrete metric $d$ defined by
$d(m, n) =
 \begin{cases}
    0 & \text{if $m = n$}\\
    1 & \text{otherwise}
 \end{cases}$
for all $m, n \in M$.
Every stamp $\varphi\colon \Sigma^* \to M$ extends uniquely to a uniformly
continuous morphism $\widehat{\varphi}\colon \widehat{\Sigma^*} \to M$ with
$\widehat{\varphi}(\lim_{n \to \infty} x_n) = \lim_{n \to \infty} \varphi(x_n)$
for every Cauchy sequence $(x_n)_{n \geq 0}$ in $\Sigma^*$.
Similarly, every \allm-morphism $f\colon \Sigma^* \to \Gamma^*$ extends uniquely
to a uniformly continuous morphism
$\widehat{f}\colon \widehat{\Sigma^*} \to \widehat{\Gamma^*}$ with
$\widehat{f}(\lim_{n \to \infty} x_n) = \lim_{n \to \infty} f(x_n)$ for every
Cauchy sequence $(x_n)_{n \geq 0}$ in $\Sigma^*$.

For $u, v \in \widehat{A^*}$ with $A$ an alphabet, we say that a stamp
$\varphi\colon \Sigma^* \to M$ \emph{\allm-satisfies} (respectively
\emph{\nem-satisfies}) the identity $u = v$ if for every \allm-morphism
(respectively \nem-morphism) $f\colon A^* \to \Sigma^*$, it holds that
$\widehat{\varphi} \circ \widehat{f}(u) =
 \widehat{\varphi} \circ \widehat{f}(v)$.
Given a set of identities $E$, we denote by $\StVidentities{E}_\allm$
(respectively $\StVidentities{E}_\nem$) the class of stamps \allm-satisfying
(respectively \nem-satisfying) all the identities of $E$. When
$\StVidentities{E}_\allm$ (respectively $\StVidentities{E}_\nem$) is equal to an
\allm-variety (respectively \nem-variety) of stamps $\V$, we say that $E$
\emph{\allm-defines} (respectively \emph{\nem-defines}) $\V$.

\begin{theorem}[{\cite[Theorem 2.1]{Pin-Straubing-2005}}]
    A class of stamps is an \allm-variety (respectively \nem-variety) of stamps
    if and only if it can be \allm-defined (respectively \nem-defined) by a set
    of identities.
\end{theorem}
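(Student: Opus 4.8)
\medskip
\noindent\textbf{Proof plan (sketch).}
This is a Reiterman-type statement, and the plan is to follow the classical route, carrying out the $\allm$ case in full detail and obtaining the $\nem$ case by replacing ``$\allm$'' with ``$\nem$'' throughout; the one thing to note for that replacement is that the identity morphism $\Sigma^*\to\Sigma^*$ is non-erasing, so every appeal to it below is equally valid in the $\nem$ setting. For the easy direction I would show that, for a set of identities $E$, the class $\StVidentities{E}_\allm$ is an $\allm$-variety: it is nonempty (the trivial stamp satisfies every identity), and closure under direct product and $\allm$-division follows routinely from three facts recalled in the preliminaries --- every stamp $\varphi$ and every $\allm$-morphism $f$ has a unique continuous extension $\widehat\varphi$, resp.\ $\widehat f$; these extensions are compatible with composition; and every morphism $\alpha$ between finite (hence discrete) monoids is continuous. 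From these, uniqueness of the continuous extension gives $\widehat\varphi=\alpha\circ\widehat\psi\circ\widehat f$ for a division $\varphi=\alpha\circ\psi\circ f$ and $\widehat{\varphi\times\psi}(w)=(\widehat\varphi(w),\widehat\psi(w))$ for a direct product, and the desired closure properties then drop out of the definition of satisfaction of an identity.

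The hard direction is the substantial one. Let $\StVariety{V}$ be an $\allm$-variety of stamps and let $E$ be the set of all identities $\allm$-satisfied by every stamp in $\StVariety{V}$, so that $\StVariety{V}\subseteq\StVidentities{E}_\allm$ trivially; the task is to prove the reverse inclusion. Fixing an alphabet $\Sigma$, I would consider the family $\StVariety{V}_\Sigma$ of all stamps in $\StVariety{V}$ with domain $\Sigma^*$ (a set up to isomorphism), ordered by ``$\psi_1\geq\psi_2$ iff $\psi_2$ factors through $\psi_1$''. This family is directed, since the direct product $\psi_1\times\psi_2$ is again a stamp with domain $\Sigma^*$ lying in $\StVariety{V}$ and projecting onto both factors. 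I would then form the profinite monoid $F$ obtained as the projective limit of the system $(N_\psi)_{\psi\in\StVariety{V}_\Sigma}$ (with $\psi\colon\Sigma^*\to N_\psi$), together with its canonical projections $\pi_\psi\colon F\to N_\psi$; the family $(\widehat\psi)$ induces a continuous morphism $p\colon\widehat{\Sigma^*}\to F$, which is surjective because the image of $\Sigma^*$ is dense in $F$ and $\widehat{\Sigma^*}$ is compact. The key claim is that for $u,v\in\widehat{\Sigma^*}$ one has $p(u)=p(v)$ if and only if the identity $u=v$ lies in $E$: indeed $p(u)=p(v)$ means exactly $\widehat\psi(u)=\widehat\psi(v)$ for all $\psi\in\StVariety{V}_\Sigma$, which gives ``only if'' by testing the stamps of $\StVariety{V}_\Sigma$ with the identity morphism, while for ``if'' one notes that for an arbitrary stamp $\psi\colon\Gamma^*\to N$ in $\StVariety{V}$ and an arbitrary $\allm$-morphism $g\colon\Sigma^*\to\Gamma^*$, the corestriction of $\psi\circ g$ to its image $\allm$-divides $\psi$ and hence lies in $\StVariety{V}_\Sigma$, so that $\widehat\psi(\widehat g(u))=\widehat\psi(\widehat g(v))$ --- and since $\psi$ and $g$ were arbitrary, $u=v\in E$. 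Closure of $\StVariety{V}$ under $\allm$-division is used precisely at this point.

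To conclude, take any $\varphi\colon\Sigma^*\to M$ in $\StVidentities{E}_\allm$. Testing $E$ with the identity morphism gives $\widehat\varphi(u)=\widehat\varphi(v)$ whenever $p(u)=p(v)$, so $\widehat\varphi$ factors through $p$, say $\widehat\varphi=\theta\circ p$ with $\theta\colon F\to M$ a surjective morphism, continuous because $p$ is a closed --- hence quotient --- map between compact Hausdorff spaces. Here the main lemma enters: a finite continuous quotient of a projective limit of finite monoids over a directed index set factors through one of the stages. Since $M$ is finite and discrete, the finitely many fibres $\theta^{-1}(m)$ are clopen, each a finite union of basic clopen sets of the form $\pi_\psi^{-1}(S)$; directedness lets one choose a single $\psi\in\StVariety{V}_\Sigma$ through which all of them factor, and surjectivity of $\pi_\psi$ then yields a surjective morphism $\bar\theta\colon N_\psi\to M$ with $\theta=\bar\theta\circ\pi_\psi$. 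Therefore $\widehat\varphi=\bar\theta\circ\pi_\psi\circ p=\bar\theta\circ\widehat\psi$, and restricting to $\Sigma^*$ gives $\varphi=\bar\theta\circ\psi$. Taking the identity $\allm$-morphism $\Sigma^*\to\Sigma^*$ together with $\alpha=\bar\theta$ exhibits $\varphi$ as an $\allm$-divisor of $\psi\in\StVariety{V}$, whence $\varphi\in\StVariety{V}$ by closure under $\allm$-division. This proves $\StVidentities{E}_\allm\subseteq\StVariety{V}$, and the theorem follows; the $\nem$ case is word-for-word the same.

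The hard part, and the only genuinely non-formal step, is the lemma that a finite continuous quotient of the profinite monoid $F$ factors through a finite stage $N_\psi$, together with the attendant bookkeeping that the maps it produces ($\theta$ and $\bar\theta$) are honest monoid morphisms and not merely continuous maps, and that the canonical projections $\pi_\psi$ are surjective; all of this rests on standard compactness properties of profinite monoids. A secondary subtlety is reconciling the ``for all alphabets and all $\allm$-morphisms into $\Sigma^*$'' quantifier in the definition of satisfaction with the single profinite congruence $\ker p$ on $\widehat{\Sigma^*}$ --- and, as indicated above, this is exactly what closure of $\StVariety{V}$ under division buys.
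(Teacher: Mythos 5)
This theorem is imported verbatim from Pin and Straubing~\cite[Theorem~2.1]{Pin-Straubing-2005}; the paper gives no proof of it, so there is nothing internal to compare against. Your sketch is essentially the standard Reiterman-type argument that underlies the cited result (easy direction by compatibility of $\widehat{\cdot}$ with products and division; hard direction via the projective limit of the stamps of $\StVariety{V}$ over a fixed alphabet and the factorisation of a finite continuous quotient through a finite stage), and it is sound modulo the standard compactness lemma you explicitly flag. One cosmetic slip: in your key claim the labels ``if'' and ``only if'' are swapped --- testing with the identity morphism proves $u=v\in E\Rightarrow p(u)=p(v)$, while the division argument with arbitrary $\psi$ and $g$ proves the converse --- but the mathematical content of both directions is present and correct.
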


To give some examples, the classical varieties of monoids $\FMVJ$, $\FMVR$ and
$\FMVL$ can be characterised by identities in the following way:
\begin{align*}
    \StVGen{\FMVR}{\allm} &=
    \StVidentities{(ab)^\omega a = (ab)^\omega}_\allm =
    \StVidentities{(ab)^\omega a = (ab)^\omega}_\nem\\
    \StVGen{\FMVL}{\allm} &=
    \StVidentities{b (ab)^\omega = (ab)^\omega}_\allm =
    \StVidentities{b (ab)^\omega = (ab)^\omega}_\nem\\
    \StVGen{\FMVJ}{\allm} &=
    \StVidentities{(ab)^\omega a = (ab)^\omega,
		   b (ab)^\omega = (ab)^\omega}_\allm =
    \StVidentities{(ab)^\omega a = (ab)^\omega,
		   b (ab)^\omega = (ab)^\omega}_\nem
    \displaypunct{.}
\end{align*}

\subparagraph{Finite locally trivial semigroups and the join operation.}
The variety $\FSVLI$ of finite locally trivial semigroups is well-known to
verify
$\StVGen{\FSVLI}{\nem} = \StVidentities{x^\omega y x^\omega = x^\omega}_\nem$
and to be such that for any alphabet $\Sigma$, the set $\DLang{\FSVLI}(\Sigma)$
consists of all Boolean combinations of languages of the form $u \Sigma^*$ or
$\Sigma^* u$ for $u \in \Sigma^*$, or equivalently of all languages of the form
$U \Sigma^* V \cup W$ with $U, V, W \subseteq \Sigma^*$ finite
(see~\cite[p.~38]{Books/Pin-1986}).

Given a variety of monoids $\V$, the join of $\V$ and $\FSVLI$, denoted by
$\V \FSVjoin \FSVLI$, is the inclusion-wise least variety of semigroups
containing both $\V$ and $\FSVLI$. In fact, a finite semigroup $S$ belongs to
$\V \FSVjoin \FSVLI$ if and only if there exist $M \in \V$ and $T \in \FSVLI$
such that $S$ divides the semigroup $M \times T$.
(See~\cite[Chapter V, Exercise 1.1]{Books/Eilenberg-1976}.)
We can prove the following adaptation to \nem-varieties of the classical results
about joins (see the appendix for the proof).

\begin{proposition}
\label{ptn:V_join_LI_ne-variety}
    Let $\V$ be a variety of monoids.
    Then $\StVGen{\FSVVjoinLI}{\nem}$ is the inclusion-wise least \nem-variety
    of stamps containing both $\StVGen{\V}{\allm}$ and $\StVGen{\FSVLI}{\nem}$.
    Moreover, $\DLang{\FSVVjoinLI}$ is the inclusion-wise least \nem-variety of
    languages containing both $\DLang{\V}$ and $\DLang{\FSVLI}$ and verifies
    that $\DLang{\FSVVjoinLI}(\Sigma)$ is the Boolean closure of
    $\DLang{\V}(\Sigma) \cup \DLang{\FSVLI}(\Sigma)$ for each alphabet $\Sigma$.
\end{proposition}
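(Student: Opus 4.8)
The plan is to reduce everything to the classical facts about joins of varieties of semigroups quoted in the text, transported along the bijective correspondences between varieties of semigroups and \nem-varieties of stamps, and between \nem-varieties of stamps and \nem-varieties of languages. First I would recall that, by the classical description of the join, a finite semigroup $S$ lies in $\FSVVjoinLI$ if and only if it divides $M \times T$ for some $M \in \V$ and $T \in \FSVLI$; hence $\StVGen{\FSVVjoinLI}{\nem}$ consists exactly of the stamps $\varphi \colon \Sigma^* \to N$ with $\varphi(\Sigma^+)$ dividing such a product $M \times T$. I would then show the two containments. For ``$\supseteq$'': a stamp $\varphi \colon \Sigma^* \to M$ with $M \in \V$ has $\varphi(\Sigma^+) \le M \in \V \subseteq \FSVVjoinLI$, so $\StVGen{\V}{\allm} \subseteq \StVGen{\FSVVjoinLI}{\nem}$, and trivially $\StVGen{\FSVLI}{\nem} \subseteq \StVGen{\FSVVjoinLI}{\nem}$ since $\FSVLI \subseteq \FSVVjoinLI$. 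So any \nem-variety of stamps containing both $\StVGen{\V}{\allm}$ and $\StVGen{\FSVLI}{\nem}$ must be contained in $\StVGen{\FSVVjoinLI}{\nem}$ only after we also prove the reverse; what actually needs an argument is that $\StVGen{\FSVVjoinLI}{\nem}$ is \emph{generated} by these two classes.

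For the nontrivial direction I would take an arbitrary stamp $\varphi \colon \Sigma^* \to N$ in $\StVGen{\FSVVjoinLI}{\nem}$, so $S := \varphi(\Sigma^+)$ divides $M \times T$ with $M \in \V$, $T \in \FSVLI$. The key construction is to exhibit $\varphi$ inside the \nem-variety generated by $\StVGen{\V}{\allm} \cup \StVGen{\FSVLI}{\nem}$ by writing it as a quotient of a \nem-division of a direct product of a stamp into a monoid of $\V$ and a stamp into a semigroup of $\FSVLI$. Concretely, from the division $S \prec M \times T$ one gets a subsemigroup $R \le M \times T$ and a surjective semigroup morphism $R \twoheadrightarrow S$; composing the generators, one builds a stamp $\psi \colon \Sigma^* \to P$ where $P$ is the submonoid of $M^1 \times T^1$ generated by the images of letters, arranges that $\psi(\Sigma^+) \le M \times T$ maps onto $S$, and checks $\psi$ \nem-divides (indeed is a product-and-quotient of) a stamp of $\StVGen{\V}{\allm}$ and one of $\StVGen{\FSVLI}{\nem}$ via the two coordinate projections. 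Then $\varphi$ is obtained from $\psi$ by a surjective morphism onto $N = \Img(\varphi)$, using that both stamps have the same domain $\Sigma^*$ and agree up to the quotient $R \twoheadrightarrow S$; closure of the generated \nem-variety under direct product and \nem-division (here the identity \allm-morphism on $\Sigma^*$ suffices, so it is in fact an \nem-morphism) finishes this part.

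Once the statement about stamps is established, the language statements follow formally. By Straubing's correspondence, $\DLang{\FSVVjoinLI} = \DLang{\StVGen{\FSVVjoinLI}{\nem}}$ is the \nem-variety of languages corresponding to the \nem-variety of stamps just shown to be generated by $\StVGen{\V}{\allm}$ and $\StVGen{\FSVLI}{\nem}$; since the correspondence between \nem-varieties of stamps and \nem-varieties of languages is a lattice isomorphism (containment-preserving in both directions, as quoted), $\DLang{\FSVVjoinLI}$ is the least \nem-variety of languages containing $\DLang{\V}$ and $\DLang{\FSVLI}$. For the pointwise Boolean-closure claim, I would verify that the assignment $\Sigma \mapsto$ (Boolean closure of $\DLang{\V}(\Sigma) \cup \DLang{\FSVLI}(\Sigma)$) is already a \nem-variety of languages: closure under Boolean operations is immediate; closure under quotients follows because $\DLang{\V}$ and $\DLang{\FSVLI}$ are each closed under quotients and quotienting commutes with Boolean operations; closure under inverse \nem-morphisms follows the same way, using that inverse images under morphisms commute with Boolean operations and that each of the two varieties is closed under inverse \nem-morphisms. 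Since this class contains both $\DLang{\V}$ and $\DLang{\FSVLI}$ and is contained in any \nem-variety of languages containing them, it equals $\DLang{\FSVVjoinLI}$.

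The main obstacle I anticipate is the second paragraph: turning the abstract division $S \prec M \times T$ into an honest statement about the stamp $\varphi$ living in the \nem-variety of stamps generated by $\StVGen{\V}{\allm}$ and $\StVGen{\FSVLI}{\nem}$, because the direct product of stamps as defined here takes the submonoid generated by the images of letters rather than the full product, and \nem-division requires tracking the morphism $f$ and the surjection $\alpha$ carefully (including the subtlety that a semigroup division must be promoted to something compatible with the monoid structure of the ambient stamps). Everything else — the two easy containments and the language-side bookkeeping — is routine, using only the quoted correspondences and the fact that Boolean operations, quotients, and inverse morphisms all commute.
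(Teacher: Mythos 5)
Your plan is correct, and on the language side it coincides with the paper's proof: minimality of $\DLang{\FSVVjoinLI}$ is transferred along the order-preserving Straubing correspondence, and the pointwise Boolean closure of $\DLang{\V}(\Sigma) \cup \DLang{\FSVLI}(\Sigma)$ is checked to be an \nem-variety of languages exactly as you describe. Where you genuinely diverge is on the minimality claim for stamps. You attack it head-on: starting from $\varphi$ with $S = \varphi(\Sigma^+)$ dividing $M \times T$, you lift the letter images through the division, assemble a product stamp over $M^1 \times T^1$ whose coordinate projections land in $\StVGen{\V}{\allm}$ and $\StVGen{\FSVLI}{\nem}$, and recover $\varphi$ by \nem-division. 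This yields an explicit decomposition of every stamp of $\StVGen{\FSVVjoinLI}{\nem}$, but it requires the ``divides $M \times T$'' characterisation of the join and the bookkeeping you anticipate; one point you should make explicit is that the surjection $\Img(\psi) \to \Img(\varphi)$ must be a \emph{monoid} morphism, which needs a small extra argument (or a harmless replacement of $T$ by a non-monoid in $\FSVLI$) in the degenerate case where some non-empty word is sent by $\psi$ to the identity of the product. The paper sidesteps the construction entirely: given any \nem-variety of stamps $\W$ containing $\StVGen{\V}{\allm}$ and $\StVGen{\FSVLI}{\nem}$, it writes $\W = \StVGen{\FSVariety{W'}}{\nem}$ for a variety of semigroups $\FSVariety{W'}$ and feeds the evaluation stamps $\eta_S \colon S^* \to S^1$ for $S \in \V \cup \FSVLI$ into $\W$ to conclude that $\FSVariety{W'}$ contains $\V$ and $\FSVLI$, hence the join by its very definition, hence $\W \supseteq \StVGen{\FSVVjoinLI}{\nem}$. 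That route needs neither the product-division characterisation of the join nor any lifting of generators; yours buys a concrete picture of the stamps involved at the cost of those details. Both approaches are sound.
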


\section{Essentially-\texorpdfstring{$\V$}{V} stamps}
\label{sec:Essentially-V}

In this section, we give a characterisation of essentially-$\V$ stamps (first
defined in~\cite{Grosshans-McKenzie-Segoufin-2021}), for $\V$ a variety of
monoids, in terms of identities. We first recall the definition.

\begin{definition}
    Let $\V$ be a variety of monoids.
    Let $\varphi\colon \Sigma^* \to M$ be a stamp and let $s$ be its stability
    index.

    We say that $\varphi$ is \emph{essentially-$\V$} whenever there exists a
    stamp $\mu\colon \Sigma^* \to N$ with $N \in \V$ such that for all
    $u, v \in \Sigma^*$, we have
    \[
        \mu(u) = \mu(v) \Rightarrow
        \bigl(\varphi(x u y) = \varphi(x v y) \quad
              \forall x, y \in \Sigma^s\bigr)
        \displaypunct{.}
    \]
    We will denote by $\StVEsntl\V$ the class of all essentially-$\V$ stamps.%
    \footnote{Essentially-$\V$ stamps are called that way by analogy with
    quasi-$\V$ stamps and the class of essentially-$\V$ stamps is denoted by
    $\StVEsntl\V$ by analogy with $\StVQuasi\V$, the notation for the class of
    quasi-$\V$ stamps. This makes sense since the initial motivation for the
    definition of essentially-$\V$ stamps was to capture the class of stamps
    into monoids of $\V$ that have the additional ability to treat separately
    some constant-length beginning and ending of a word. This ability can indeed
    be seen as orthogonal to the additional ability of stamps into monoids in
    $\V$ to perform modular counting on the positions of letters in a word,
    which is often handled by considering quasi-$\V$ stamps.
    (See~\cite{Grosshans-McKenzie-Segoufin-2021} for more.)
    Our definition of $\StVEsntl\V$ does unfortunately not coincide with the
    usual definition of $\StVEsntl\V$, that classically denotes the variety of
    monoids $M$ such that the submonoid generated by the idempotents of $M$ is
    in $\V$. (This comes, among others, from the fact that the obtained variety
    of monoids does always contain at least all finite groups.)}
\end{definition}

Now, we give a characterisation for a stamp to be essentially-$\V$, based on a
specific congruence depending on that stamp.

\begin{definition}
    Let $\varphi\colon \Sigma^* \to M$ be a stamp and let $s$ be its stability
    index.
    We define the equivalence relation $\equiv_\varphi$ on $\Sigma^*$ by
    $u \equiv_\varphi v$ for $u, v \in \Sigma^*$ whenever
    $\varphi(x u y) = \varphi(x v y)$ for all $x, y \in \Sigma^{\geq s}$.
\end{definition}

\begin{proposition}
\label{ptn:Essential_congruence}
    Let $\varphi\colon \Sigma^* \to M$ be a stamp.
    Then $\equiv_\varphi$ is a congruence of finite index and for any variety of
    monoids $\V$, we have $\varphi \in \StVEsntl\V$ if and only if
    $\Sigma^* \quotient \equiv_\varphi \in \V$.
\end{proposition}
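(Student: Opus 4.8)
The plan is to prove the statement in two parts: first that $\equiv_\varphi$ is a congruence of finite index, then the characterisation of $\StVEsntl\V$ in terms of the quotient monoid $\Sigma^* \quotient \equiv_\varphi$.

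For the first part, congruence is essentially formal: if $u \equiv_\varphi v$ then for any $p, q \in \Sigma^*$ and any $x, y \in \Sigma^{\geq s}$ we have $xp \in \Sigma^{\geq s}$ and $qy \in \Sigma^{\geq s}$, so $\varphi(x(pu q)y) = \varphi((xp)u(qy)) = \varphi((xp)v(qy)) = \varphi(x(pvq)y)$, giving $puq \equiv_\varphi pvq$. For finite index, the key observation is that $\equiv_\varphi$ refines only finitely: the class of $u$ is determined by the function $(x,y) \mapsto \varphi(xuy)$ restricted to $x, y \in \Sigma^{\geq s}$, but by stability $\varphi(\Sigma^{\geq s}) = \varphi(\Sigma^s)$ is a fixed finite set, and in fact the value $\varphi(xuy)$ for $x, y$ of length at least $s$ depends only on $\varphi(x), \varphi(y) \in \varphi(\Sigma^s)$ together with $u$; so the class of $u$ is determined by the finite table $(m, n) \mapsto m \cdot \varphi(u) \cdot n$ over $m, n \in \varphi(\Sigma^s)$ — wait, more carefully, by the map $(m,n) \mapsto \widehat\varphi$-image, but since every element of $\varphi(\Sigma^{\ge s})$ is realised, the class is determined by the pair of functions $\varphi(\Sigma^s) \times \varphi(\Sigma^s) \to M$ sending $(m,n)$ to the common value $\varphi(xuy)$ whenever $\varphi(x)=m$, $\varphi(y)=n$. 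There are finitely many such tables, hence finite index.

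For the characterisation, I would set $N_\varphi = \Sigma^* \quotient \equiv_\varphi$ with quotient morphism $\mu_\varphi$. For the direction $\Sigma^* \quotient \equiv_\varphi \in \V \Rightarrow \varphi \in \StVEsntl\V$: take $\mu = \mu_\varphi$; if $\mu_\varphi(u) = \mu_\varphi(v)$, i.e. $u \equiv_\varphi v$, then in particular $\varphi(xuy) = \varphi(xvy)$ for all $x, y \in \Sigma^s \subseteq \Sigma^{\geq s}$, which is exactly the defining property of essentially-$\V$. For the converse, suppose $\varphi \in \StVEsntl\V$ witnessed by $\mu\colon \Sigma^* \to N$ with $N \in \V$; I claim $\ker\mu \subseteq{} \equiv_\varphi$, which since $\V$ is closed under division yields $\Sigma^* \quotient \equiv_\varphi \in \V$ as a quotient of $N$ (restricted to the image). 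So assume $\mu(u) = \mu(v)$; I must show $\varphi(xuy) = \varphi(xvy)$ for all $x, y \in \Sigma^{\geq s}$, not just length exactly $s$. Write $x = x' x''$ with $|x''| = s$ and $y = y' y''$ with $|y'| = s$. The essentially-$\V$ hypothesis gives $\varphi$ agrees on $x'' (\cdot) y'$ applied to any $\mu$-equivalent pair; but $\mu(u) = \mu(v)$ does not immediately give $\mu(x'' u y') = \mu(x'' v y')$ unless $\mu$ is a congruence-compatible statement — it is, since $\mu$ is a morphism: $\mu(x'' u y') = \mu(x'')\mu(u)\mu(y') = \mu(x'')\mu(v)\mu(y') = \mu(x'' v y')$. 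Hence by the hypothesis applied to the pair $x'' u y' \mathrel{\text{and}} x'' v y'$ (both "inner words"), wrapped by $x' \in \Sigma^{\geq 0}$ and $y'' \in \Sigma^{\geq 0}$... but here I need wrapping by words of length exactly $s$ again. The clean fix: apply the hypothesis directly with inner words $u, v$ and outer words $x'', y'$ of length exactly $s$ to get $\varphi(x'' u y') = \varphi(x'' v y')$, hence $\varphi(x u y) = \varphi(x') \varphi(x'' u y') = \varphi(x') \varphi(x'' v y') = \varphi(x v y)$, using that $\varphi$ is a morphism. This closes the argument.

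The main obstacle I anticipate is the finite-index claim — specifically being careful that the value $\varphi(xuy)$ for $x, y$ of unbounded length at least $s$ really is a function of the bounded data $(\varphi(x), \varphi(y), u)$, which relies precisely on the stability property $\varphi(\Sigma^{\geq s}) = \varphi(\Sigma^s)$ and the fact that $\varphi$ is a morphism (so $\varphi(xuy)$ factors as $\varphi(x)\varphi(u)\varphi(y)$, and $\varphi(x)$ ranges over the fixed finite semigroup $\varphi(\Sigma^s)$ once $|x| \geq s$). Everything else is routine morphism bookkeeping and the standard fact that varieties are closed under quotients.
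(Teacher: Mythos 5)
Your argument is correct and follows essentially the same route as the paper's proof: the congruence property by absorbing the context $p,q$ into the outer words, the forward implication via the quotient stamp $\mu(w) = [w]_{\equiv_\varphi}$, and the converse by factoring $x = x'x''$ and $y = y'y''$ so as to expose outer words of length exactly $s$ and then dividing $\Sigma^* \quotient \equiv_\varphi$ out of $N$. One assertion you make (twice) is, however, false: the stability index does \emph{not} give $\varphi(\Sigma^{\geq s}) = \varphi(\Sigma^s)$. For the stamp from $\set{a}^*$ onto a two-element group sending $a$ to the generator, the stability index is $2$, yet $\varphi(\Sigma^{\geq 2})$ is the whole group while $\varphi(\Sigma^{2})$ is the trivial subgroup; in general the sequence $\bigl(\varphi(\Sigma^n)\bigr)_n$ is eventually periodic with period dividing $s$, not necessarily $1$. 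Fortunately your finite-index argument does not need that equality: the $\equiv_\varphi$-class of $u$ is determined by the table $(m,n) \mapsto m\,\varphi(u)\,n$ over $\varphi(\Sigma^{\geq s}) \times \varphi(\Sigma^{\geq s})$, which is a finite set simply because it sits inside $M \times M$, and hence by $\varphi(u)$ alone --- so there are at most $\card{M}$ classes. (This is the paper's one-line version: $\varphi(u) = \varphi(v)$ implies $u \equiv_\varphi v$.) Finally, in your last chain of equalities the factor $\varphi(y'')$ is missing from $\varphi(x')\varphi(x''uy')\varphi(y'')$, but that is clearly just a slip.
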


\begin{proof}
    Let us denote by $s$ the stability index of $\varphi$.

    The equivalence relation $\equiv_\varphi$ is a congruence because given
    $u, v \in \Sigma^*$ verifying $u \equiv_\varphi v$, for all
    $\alpha, \beta \in \Sigma^*$, we have
    $\alpha u \beta \equiv_\varphi \alpha v \beta$ since for any
    $x, y \in \Sigma^{\geq s}$, it holds that
    $\varphi(x \alpha u \beta y) = \varphi(x \alpha v \beta y)$ because
    $x \alpha, \beta y \in \Sigma^{\geq s}$.
    Furthermore, this congruence is of finite index because for all
    $u, v \in \Sigma^*$, we have that $\varphi(u) = \varphi(v)$ implies
    $u \equiv_\varphi v$.

    Let now $\V$ be a variety of monoids.
    Assume first that $\Sigma^* \quotient \equiv_\varphi \in \V$. It is quite
    direct to see that $\varphi \in \StVEsntl\V$, as the stamp
    $\mu\colon \Sigma^* \to \Sigma^* \quotient \equiv_\varphi$ defined by
    $\mu(w) = [w]_{\equiv_\varphi}$ for all $w \in \Sigma^*$ witnesses this
    fact.
    Assume then that $\varphi \in \StVEsntl\V$. This means that there exists a
    stamp $\mu\colon \Sigma^* \to N$ with $N \in \V$ such that for all
    $u, v \in \Sigma^*$, we have
    \[
        \mu(u) = \mu(v) \Rightarrow
        \bigl(\varphi(x u y) = \varphi(x v y) \quad
              \forall x, y \in \Sigma^s\bigr)
        \displaypunct{.}
    \]
    Now consider $u, v \in \Sigma^*$ such that $\mu(u) = \mu(v)$. For any
    $x, y \in \Sigma^{\geq s}$, we have that $x = x_1 x_2$ with
    $x_1 \in \Sigma^*$ and $x_2 \in \Sigma^s$ as well as $y = y_1 y_2$ with
    $y_1 \in \Sigma^s$ and $y_2 \in \Sigma^*$, so that
    $\varphi(x u y) = \varphi(x_1) \varphi(x_2 u y_1) \varphi(y_2) =
     \varphi(x_1) \varphi(x_2 v y_1) \varphi(y_2) = \varphi(x v y)$.
    Hence, $u \equiv_\varphi v$.
    Therefore, for all $u, v \in \Sigma^*$, we have that $\mu(u) = \mu(v)$
    implies $u \equiv_\varphi v$, so we can define the mapping
    $\alpha\colon N \to \Sigma^* \quotient \equiv_\varphi$ such that
    $\alpha(\mu(w)) = [w]_{\equiv_\varphi}$ for all $w \in \Sigma^*$. It is easy
    to check that $\alpha$ is actually a surjective morphism. Thus, we can
    conclude that $\Sigma^* \quotient \equiv_\varphi$, which divides $N$,
    belongs to $\V$.
\end{proof}

Using this characterisation, we prove that given a set of identities
\nem-defining $\StVGen{\V}{\allm}$ for a variety of monoids $\V$, we get a set
of identities \nem-defining $\StVEsntl\V$.

\begin{proposition}
\label{ptn:Essentially-V_identities}
    Let $\V$ be a variety of monoids and let $E$ be a set of identities such
    that $\StVGen{\V}{\allm} = \StVidentities{E}_\nem$.
    Then $\StVEsntl\V$ is an \nem-variety of stamps and
    \[
	\StVEsntl\V =
	\StVidentities{x^\omega y u z t^\omega = x^\omega y v z t^\omega \mid
		       u = v \in E,
		       x, y, z, t \notin \alphabet(u) \cup \alphabet(v)}_\nem
	\displaypunct{.}
    \]
\end{proposition}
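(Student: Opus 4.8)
The plan is to route everything through the canonical projection $\mu_\varphi\colon\Sigma^*\to\Sigma^*\quotient\equiv_\varphi$. Fix a stamp $\varphi\colon\Sigma^*\to M$ of stability index $s$. By \cref{ptn:Essential_congruence}, $\varphi\in\StVEsntl\V$ iff $\Sigma^*\quotient\equiv_\varphi\in\V$; and since $\mu_\varphi$ is a stamp onto $\Sigma^*\quotient\equiv_\varphi$ and $\StVGen{\V}{\allm}=\StVidentities{E}_\nem$, this happens iff $\mu_\varphi$ \nem-satisfies every identity of $E$. So I would reduce the statement to the following claim, to be proved for each $u=v\in E$ over $A=\alphabet(u)\cup\alphabet(v)$ and each stamp $\varphi$: $\mu_\varphi$ \nem-satisfies $u=v$ if and only if $\varphi$ \nem-satisfies $x^\omega yuzt^\omega=x^\omega yvzt^\omega$, where $x,y,z,t\notin A$. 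The asserted equality of classes then follows by ranging over $E$, and $\StVEsntl\V$ is an \nem-variety by \cite[Theorem 2.1]{Pin-Straubing-2005}. Before proving the claim I would unwind both of its sides. Splitting a \nem-morphism $g$ of the extended alphabet into its restriction $h=g|_{A^*}\colon A^*\to\Sigma^*$ together with $p=g(x)$, $q=g(y)$, $r=g(z)$, $w=g(t)\in\Sigma^+$, and using that $\widehat{g}$ and $\widehat{\varphi}$ are continuous morphisms, the right-hand side says that $\varphi(p)^\omega\varphi(q)\,\widehat{\varphi}(\widehat{h}(u))\,\varphi(r)\varphi(w)^\omega=\varphi(p)^\omega\varphi(q)\,\widehat{\varphi}(\widehat{h}(v))\,\varphi(r)\varphi(w)^\omega$ for all \nem-morphisms $h\colon A^*\to\Sigma^*$ and all $p,q,r,w\in\Sigma^+$. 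Writing $u=\lim_n u_n$, $v=\lim_n v_n$ with $u_n,v_n\in A^*$ and using that $M$ and $\Sigma^*\quotient\equiv_\varphi$ are finite and discrete, the left-hand side says that for every \nem-morphism $h$ one has $\varphi(\xi)\,\widehat{\varphi}(\widehat{h}(u))\,\varphi(\zeta)=\varphi(\xi)\,\widehat{\varphi}(\widehat{h}(v))\,\varphi(\zeta)$ for all $\xi,\zeta\in\Sigma^{\geq s}$.

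The direction ``$\Rightarrow$'' should be routine. Fixing $h$ and $p,q,r,w$, I would choose $n_0$ with $h(u_n)\equiv_\varphi h(v_n)$ for all $n\geq n_0$; then, whenever $m!\geq s$, the words $p^{m!}q$ and $rw^{m!}$ lie in $\Sigma^{\geq s}$, so $\varphi(p^{m!}q\,h(u_n)\,rw^{m!})=\varphi(p^{m!}q\,h(v_n)\,rw^{m!})$ for all such $m$ and all $n\geq n_0$; letting $m\to\infty$ and then $n\to\infty$, and using uniform continuity of $\widehat{\varphi}$ together with $p^{m!}\to p^\omega$, $w^{m!}\to w^\omega$, $h(u_n)\to\widehat{h}(u)$ and $h(v_n)\to\widehat{h}(v)$, yields the required identity.

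The direction ``$\Leftarrow$'' is the heart of the proof, and the hard part will be handling $\varphi(\Sigma^s)$. Assuming the right-hand equality for all $h,p,q,r,w$, I must establish $\varphi(\xi)\,\widehat{\varphi}(\widehat{h}(u))\,\varphi(\zeta)=\varphi(\xi)\,\widehat{\varphi}(\widehat{h}(v))\,\varphi(\zeta)$ for all $\xi,\zeta\in\Sigma^{\geq s}$, and, exactly as in the proof of \cref{ptn:Essential_congruence}, it suffices to treat $\xi,\zeta\in\Sigma^s$. The obstacle is that a general element of $\varphi(\Sigma^s)$ is not itself of the form $\varphi(p)^\omega\varphi(q)$, so $\xi$ and $\zeta$ cannot simply be substituted into the hypothesis. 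I plan to circumvent this by a pumping argument in the finite semigroup $\varphi(\Sigma^s)=\varphi(\Sigma^{2s})=\varphi(\Sigma^{ks})$ (the last equality holding for every $k\geq1$, by stability): choosing $k>\card{M}$ and writing $\varphi(\xi)=\varphi(w_1)\cdots\varphi(w_k)$ with $w_1,\dots,w_k\in\Sigma^s$, two of the prefix products $\varphi(w_1\cdots w_i)$ (all of which lie in $\varphi(\Sigma^s)$) must coincide; taking the $\omega$-power of the block between them produces an idempotent, hence a factorisation $\varphi(\xi)=a\cdot\bigl(\varphi(P)^\omega\varphi(Q)\bigr)$ with $a\in\varphi(\Sigma^+)$ and $P,Q\in\Sigma^+$, and, symmetrically using suffix products, $\varphi(\zeta)=\bigl(\varphi(R)\varphi(W)^\omega\bigr)\cdot c$ with $c\in\varphi(\Sigma^+)$ and $R,W\in\Sigma^+$. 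Feeding $p=P$, $q=Q$, $r=R$, $w=W$ (and the same $h$) into the hypothesis, then multiplying the resulting equality on the left by $a$, on the right by $c$, and re-associating, gives precisely $\varphi(\xi)\,\widehat{\varphi}(\widehat{h}(u))\,\varphi(\zeta)=\varphi(\xi)\,\widehat{\varphi}(\widehat{h}(v))\,\varphi(\zeta)$. I expect this re-representation step — rewriting arbitrary elements of $\varphi(\Sigma^s)$ so that the relevant factor gets a prefix of shape $\varphi(P)^\omega\varphi(Q)$ and a suffix of shape $\varphi(R)\varphi(W)^\omega$, as the $x^\omega\dots t^\omega$ identities require — to be the one genuine difficulty; the rest is bookkeeping with continuity and with the definition of $\equiv_\varphi$.
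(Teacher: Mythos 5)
Your proposal is correct and follows essentially the same route as the paper: the same reduction via the congruence $\equiv_\varphi$, the same central equivalence between $\widehat{\mu}$-equality and the $\alpha^\omega\beta\cdots\gamma\delta^\omega$-conjugated $\widehat{\varphi}$-equalities, and the same key step of refactoring elements of $\varphi(\Sigma^{\geq s})$ into the shape $a\,\varphi(P)^\omega\varphi(Q)$ (resp.\ $\varphi(R)\varphi(W)^\omega c$) --- your pigeonhole argument on prefix products is just an inline proof of the classical fact, cited by the paper, that every element of a finite semigroup $S$ with $S=S^2$ factors through an idempotent. The only point to tidy is the degenerate case where the second coinciding prefix product is the full product (which would leave $Q$ empty); this is handled, e.g., by taking one more block or by absorbing an extra power of $P$ into $Q$.
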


\begin{proof}
    Let
    \[
	F = \set{x^\omega y u z t^\omega = x^\omega y v z t^\omega \mid
		 u = v \in E, x, y, z, t \notin \alphabet(u) \cup \alphabet(v)}
	\displaypunct{.}
    \]

    Central to the proof is the following claim.
    \begin{claim}
    \label{clm:Essential_congruence}
	Let $\varphi\colon \Sigma^* \to M$ be a stamp.
	Consider the stamp
	$\mu\colon \Sigma^* \to \Sigma^* \quotient \equiv_\varphi$ defined by
	$\mu(w) = [w]_{\equiv_\varphi}$ for all $w \in \Sigma^*$.
	It holds that for all $u, v \in \widehat{\Sigma^*}$,
	\[
	    \widehat{\mu}(u) = \widehat{\mu}(v) \Leftrightarrow
	    \bigl(\widehat{\varphi}(\alpha^\omega \beta u \gamma \delta^\omega)
		  =
		  \widehat{\varphi}(\alpha^\omega \beta v \gamma \delta^\omega)
		  \quad
		  \forall \alpha, \beta, \gamma, \delta \in \Sigma^+
	    \bigr)
	    \displaypunct{.}
	\]
    \end{claim}

    Before we prove Claim~\ref{clm:Essential_congruence}, we use it to prove
    that $\StVEsntl\V = \StVidentities{F}_\nem$.

    \proofsubparagraph{Inclusion from left to right.}
    Let $\varphi\colon \Sigma^* \to M$ be a stamp in $\StVEsntl\V$.
    Consider the stamp
    $\mu\colon \Sigma^* \to \Sigma^* \quotient \equiv_\varphi$ defined by
    $\mu(w) = [w]_{\equiv_\varphi}$ for all $w \in \Sigma^*$. Since
    $\varphi \in \StVEsntl\V$, Proposition~\ref{ptn:Essential_congruence} tells
    us that $\Sigma^* \quotient \equiv_\varphi \in \V$, hence
    $\mu \in \StVGen{\V}{\allm}$.

    Let us consider any identity
    $x^\omega y u z t^\omega = x^\omega y v z t^\omega \in F$. It is written on
    an alphabet $B$ that is the union of the alphabet $A$ on which $u = v \in E$
    is written and of $x, y, z, t \in B \setminus A$.
    Let $f\colon B^* \to \Sigma^*$ be an \nem-morphism.
    Since $\mu \in \StVGen{\V}{\allm}$, we have that $\mu$ \nem-satisfies the
    identity $u = v$, so that
    $\widehat{\mu}(\widehat{f}(u)) = \widehat{\mu}(\widehat{f}(v))$. Notice that
    we have that $\widehat{f}(x^\omega) = f(x)^\omega$ as well as
    $\widehat{f}(t^\omega) = f(t)^\omega$ and that
    $f(x), f(y), f(z), f(t) \in \Sigma^+$ because $f$ is non-erasing.
    Therefore, we have
    \begin{align*}
	\widehat{\varphi}\bigl(\widehat{f}(x^\omega y u z t^\omega)\bigr)
	& = \widehat{\varphi}\bigl(f(x)^\omega f(y) \widehat{f}(u)
				   f(z) f(t)^\omega\bigr)\\
	& = \widehat{\varphi}\bigl(f(x)^\omega f(y) \widehat{f}(v)
				   f(z) f(t)^\omega\bigr)\\
	& = \widehat{\varphi}\bigl(\widehat{f}(x^\omega y v z t^\omega)\bigr)
    \end{align*}
    by Claim~\ref{clm:Essential_congruence}.
    As this holds for any \nem-morphism $f\colon B^* \to \Sigma^*$, we can
    conclude that $\varphi$ \nem-satisfies the identity
    $x^\omega y u z t^\omega = x^\omega y v z t^\omega$.

    This is true for any identity in $F$, so
    $\varphi \in \StVidentities{F}_\nem$.
    In conclusion, $\StVEsntl\V \subseteq \StVidentities{F}_\nem$.

    \proofsubparagraph{Inclusion from right to left.}
    Let $\varphi\colon \Sigma^* \to M$ be a stamp in $\StVidentities{F}_\nem$.
    Consider the stamp
    $\mu\colon \Sigma^* \to \Sigma^* \quotient \equiv_\varphi$ defined by
    $\mu(w) = [w]_{\equiv_\varphi}$ for all $w \in \Sigma^*$.
    We are now going to show that $\mu \in \StVGen{\V}{\allm}$.

    Take any identity $u = v \in E$ written on an alphabet $A$. There exists an
    identity $x^\omega y u z t^\omega = x^\omega y v z t^\omega \in F$ written
    on an alphabet $B$ such that $A \subseteq B$ and
    $x, y, z, t \in B \setminus A$.
    Let $f\colon A^* \to \Sigma^*$ be an \nem-morphism.

    Take any $\alpha, \beta, \gamma, \delta \in \Sigma^+$. Let us define the
    \nem-morphism $g\colon B^* \to \Sigma^*$ as the unique one which extends $f$
    by letting $g(x) = \alpha$, $g(y) = \beta$, $g(z) = \gamma$ and
    $g(t) = \delta$.
    Observe in particular that $\widehat{g}(w) = \widehat{f}(w)$ for any
    $w \in \widehat{A^*}$ and that
    $\widehat{g}(x^\omega) = g(x)^\omega = \alpha^\omega$ as well as
    $\widehat{g}(t^\omega) = \delta^\omega$.
    Now, as $\varphi$ \nem-satisfies
    $x^\omega y u z t^\omega = x^\omega y v z t^\omega$, we have that
    \[
	\widehat{\varphi}\bigl(\alpha^\omega \beta \widehat{f}(u)
			       \gamma \delta^\omega\bigr)
	= \widehat{\varphi}\bigl(\widehat{g}(x^\omega y u z t^\omega)\bigr)
	= \widehat{\varphi}\bigl(\widehat{g}(x^\omega y v z t^\omega)\bigr)
	= \widehat{\varphi}\bigl(\alpha^\omega \beta \widehat{f}(v) 
				 \gamma \delta^\omega\bigr)
	\displaypunct{.}
    \]
    Since this holds for any $\alpha, \beta, \gamma, \delta \in \Sigma^+$, by
    Claim~\ref{clm:Essential_congruence}, we have that
    $\widehat{\mu}(\widehat{f}(u)) = \widehat{\mu}(\widehat{f}(v))$.

    Therefore, $\widehat{\mu}(\widehat{f}(u)) = \widehat{\mu}(\widehat{f}(v))$
    for any \nem-morphism $f\colon A^* \to \Sigma^*$, which means that $\mu$
    \nem-satisfies $u = v$.

    Since this holds for any $u = v \in E$, we have that
    $\mu \in \StVGen{\V}{\allm}$, which implies that
    $\Sigma^* \quotient \equiv_\varphi \in \V$ and thus
    $\varphi \in \StVEsntl\V$ by Proposition~\ref{ptn:Essential_congruence}. In
    conclusion, $\StVidentities{F}_\nem \subseteq \StVEsntl\V$.

    \medskip

    The claim still needs to be proved.

    \begin{claimproof}[Proof of Claim~\ref{clm:Essential_congruence}]
	Let $\varphi\colon \Sigma^* \to M$ be a stamp of stability index $s$.
	Consider the stamp
	$\mu\colon \Sigma^* \to \Sigma^* \quotient \equiv_\varphi$ defined by
	$\mu(w) = [w]_{\equiv_\varphi}$ for all $w \in \Sigma^*$.
	We now want to show that for all $u, v \in \widehat{\Sigma^*}$,
	\[
	    \widehat{\mu}(u) = \widehat{\mu}(v) \Leftrightarrow
	    \bigl(\widehat{\varphi}(\alpha^\omega \beta u \gamma \delta^\omega)
		  =
		  \widehat{\varphi}(\alpha^\omega \beta v \gamma \delta^\omega)
		  \quad
		  \forall \alpha, \beta, \gamma, \delta \in \Sigma^+
	    \bigr)
	    \displaypunct{.}
	\]

	Let $u, v \in \widehat{\Sigma^*}$. There exist two Cauchy sequences
	$(u_n)_{n \geq 0}$ and $(v_n)_{n \geq 0}$ in $\Sigma^*$ such that
	$u = \lim_{n \to \infty} u_n$ and $v = \lim_{n \to \infty} v_n$.
	As $\Sigma^* \quotient \equiv_\varphi$ and $M$ are discrete, we have
	that all four Cauchy sequences $\bigl(\mu(u_n)\bigr)_{n \geq 0}$,
	$\bigl(\varphi(u_n)\bigr)_{n \geq 0}$, $\bigl(\mu(v_n)\bigr)_{n \geq 0}$
	and $\bigl(\varphi(v_n)\bigr)_{n \geq 0}$ are ultimately constant. So
	there exists $k \in \N$ such that $\widehat{\mu}(u) = \mu(u_k)$,
	$\widehat{\varphi}(u) = \varphi(u_k)$, $\widehat{\mu}(v) = \mu(v_k)$ and
	$\widehat{\varphi}(v) = \varphi(v_k)$.

	Assume first that $\widehat{\mu}(u) = \widehat{\mu}(v)$.
	Take any $\alpha, \beta, \gamma, \delta \in \Sigma^+$.
	Since $M$ is discrete, both Cauchy sequences
	$\bigl(\varphi(\alpha^{n!})\bigl)_{n \geq 0}$ and
	$\bigl(\varphi(\delta^{n!})\bigl)_{n \geq 0}$ are ultimately constant.
	So there exists $l \in \N$ such that for all $m \in \N, m \geq l$, we
	have $\widehat{\varphi}(\alpha^\omega) = \varphi(\alpha^{m!})$ and
	$\widehat{\varphi}(\delta^\omega) = \varphi(\delta^{m!})$.
	Hence, taking $m \in \N, m \geq l$ such that
	$\length{\alpha^{m!} \beta} \geq s$ and
	$\length{\gamma \delta^{m!}} \geq s$, it follows that
	\[
	    \widehat{\varphi}(\alpha^\omega \beta u \gamma \delta^\omega) =
	    \varphi(\alpha^{m!} \beta u_k \gamma \delta^{m!}) =
	    \varphi(\alpha^{m!} \beta v_k \gamma \delta^{m!}) =
	    \widehat{\varphi}(\alpha^\omega \beta v \gamma \delta^\omega)
	\]
	because
	$[u_k]_{\equiv_\varphi} = \widehat{\mu}(u) =
	 \widehat{\mu}(v) = [v_k]_{\equiv_\varphi}$.
	Thus, we have that
	\[
	    \widehat{\varphi}(\alpha^\omega \beta u \gamma \delta^\omega) =
	    \widehat{\varphi}(\alpha^\omega \beta v \gamma \delta^\omega)
	\]
	for all $\alpha, \beta, \gamma, \delta \in \Sigma^+$.

	Assume then that
	$\widehat{\varphi}(\alpha^\omega \beta u \gamma \delta^\omega) =
	 \widehat{\varphi}(\alpha^\omega \beta v \gamma \delta^\omega)$
	for all $\alpha, \beta, \gamma, \delta \in \Sigma^+$.
	Take any $\alpha, \beta \in \Sigma^{\geq s}$. Since $\varphi(\Sigma^s)$
	is a finite semigroup and verifies that
	$\varphi(\Sigma^s) = \varphi(\Sigma^s)^2$, by a classical result in
	finite semigroup theory (see
	e.g.~\cite[Chapter~1, Proposition~1.12]{Books/Pin-1986}), we have that
	there exist $\alpha_1, e, f, \beta_2 \in \Sigma^s$ and
	$\alpha_2, \beta_1 \in \Sigma^{\geq s}$ such that
	$\varphi(\alpha_1 e \alpha_2) = \varphi(\alpha)$ and
	$\varphi(\beta_1 f \beta_2) = \varphi(\beta)$ with $\varphi(e)$ and
	$\varphi(f)$ idempotents.
	Now, since $\varphi(e)$ is idempotent, we have that
	\[
	    \widehat{\varphi}(e^\omega) =
	    \widehat{\varphi}(\lim_{n \to \infty} e^{n!}) =
	    \lim_{n \to \infty} \varphi(e^{n!}) =
	    \lim_{n \to \infty} \varphi(e)^{n!} =
	    \varphi(e)
	\]
	and similarly, $\widehat{\varphi}(f^\omega) = \varphi(f)$.
	So it follows that
	\begin{align*}
	    \varphi(\alpha u_k \beta)
	    & = \varphi(\alpha_1 e \alpha_2 u_k \beta_1 f \beta_2)\\
	    & = \widehat{\varphi}(\alpha_1 e^\omega \alpha_2 u
				  \beta_1 f^\omega \beta_2)\\
	    & = \widehat{\varphi}(\alpha_1 e^\omega \alpha_2 v
				  \beta_1 f^\omega \beta_2)\\
	    & = \varphi(\alpha_1 e \alpha_2 v_k \beta_1 f \beta_2)\\
	    & = \varphi(\alpha v_k \beta)
	    \displaypunct{.}
	\end{align*}
	As this is true for any $\alpha, \beta \in \Sigma^{\geq s}$, by
	definition it holds that $u_k \equiv_\varphi v_k$, hence
	$\widehat{\mu}(u) = \mu(u_k) = \mu(v_k) = \widehat{\mu}(v)$.
    \end{claimproof}

    This concludes the proof of the proposition.
\end{proof}

\section{Essentially-\texorpdfstring{$\V$}{V} stamps and the join of
	 \texorpdfstring{$\V$}{V} and \texorpdfstring{$\FSVLI$}{LI}}
\label{sec:Essentially-V_and_join_with_LI}

In this section, we establish the link between essentially-$\V$ stamps and
$\V \FSVjoin \FSVLI$ and give a criterion that characterises exactly when they
do correspond.

More precisely, consider the following criterion for a variety of monoids $\V$.
\theoremstyle{definition}
\newtheorem{criterion}{Criterion}
\renewcommand{\thecriterion}{(\Alph{criterion})}
\begin{criterion}
\label{ctn:variety_criterion}
    For any $L \in \DLang{\V}(\Sigma)$ with $\Sigma$ an alphabet, we have
    $x L y \in \DLang{\V \FSVjoin \FSVLI}(\Sigma)$ for all $x, y \in \Sigma^*$.
\end{criterion}
It is a kind of mild closure condition that appears to be a sufficient and
necessary condition for $\StVEsntl\V$ and $\V \FSVjoin \FSVLI$ to correspond.

\begin{proposition}
\label{ptn:Essentially-V-V_join_LI}
    Let $\V$ be a variety of monoids.
    Then $\StVGen{\V \FSVjoin \FSVLI}{\nem} \subseteq \StVEsntl\V$ and equality
    holds if and only if $\V$ verifies criterion~\ref{ctn:variety_criterion}.
\end{proposition}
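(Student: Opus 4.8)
The plan is to prove $\StVGen{\FSVVjoinLI}{\nem}\subseteq\StVEsntl\V$ unconditionally, and then the two implications of the equivalence separately. For the inclusion, recall from Proposition~\ref{ptn:V_join_LI_ne-variety} that $\StVGen{\FSVVjoinLI}{\nem}$ is the least \nem-variety of stamps containing $\StVGen{\V}{\allm}$ and $\StVGen{\FSVLI}{\nem}$, and from Proposition~\ref{ptn:Essentially-V_identities} that $\StVEsntl\V$ is an \nem-variety of stamps; so it is enough to check that $\StVGen{\V}{\allm}$ and $\StVGen{\FSVLI}{\nem}$ are both contained in $\StVEsntl\V$. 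The first is immediate, a stamp into a monoid of $\V$ witnessing itself as essentially-$\V$ (take $\mu=\varphi$). For the second, let $\varphi\colon\Sigma^*\to M$ have $\varphi(\Sigma^+)\in\FSVLI$ and stability index $s$; I would show $\Sigma^*\quotient\equiv_\varphi$ is trivial, whence $\varphi\in\StVEsntl\V$ by Proposition~\ref{ptn:Essential_congruence}. For $a,b\in\Sigma^{\geq s}$, use the idempotent factorisations $\varphi(a)=\varphi(a_1 e a_2)$, $\varphi(b)=\varphi(b_1 f b_2)$ with $a_1,e,f,b_2\in\Sigma^s$, $a_2,b_1\in\Sigma^{\geq s}$ and $\varphi(e),\varphi(f)$ idempotent, exactly as in the proof of Proposition~\ref{ptn:Essentially-V_identities}; then $\varphi(aub)=\varphi(a_1)\,\varphi(e)\,\varphi(a_2 u b_1)\,\varphi(f)\,\varphi(b_2)$, and as $\varphi(e),\varphi(f)$ are idempotents of the locally trivial semigroup $\varphi(\Sigma^+)$ and $\varphi(a_2 u b_1)\in\varphi(\Sigma^+)$, the middle product $\varphi(e)\,\varphi(a_2 u b_1)\,\varphi(f)$ does not depend on $u$ --- the elementary fact being that in a finite semigroup with $gyg=g$ for every idempotent $g$, the value $gth$ depends only on the idempotents $g,h$ and not on $t$. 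Hence $\varphi(aub)$ is independent of $u$, so $u\equiv_\varphi v$ for all $u,v$.

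For ``Criterion~\ref{ctn:variety_criterion} implies equality'', given the inclusion it suffices to show $\DLang{\StVEsntl\V}(\Sigma)\subseteq\DLang{\FSVVjoinLI}(\Sigma)$ for every $\Sigma$, as inclusion of \nem-varieties of stamps is equivalent to inclusion of the corresponding \nem-varieties of languages. Let $L\in\DLang{\StVEsntl\V}(\Sigma)$ with syntactic morphism $\varphi\colon\Sigma^*\to M$; then $\varphi$ is essentially-$\V$, so $N:=\Sigma^*\quotient\equiv_\varphi\in\V$ by Proposition~\ref{ptn:Essential_congruence}, and I write $\mu\colon\Sigma^*\to N$ for the quotient morphism, $s$ for the stability index of $\varphi$, and $L=\varphi^{-1}(P)$. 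Splitting $L$ by length, $L\cap\Sigma^{<2s}$ is finite, hence in $\DLang{\FSVLI}(\Sigma)$; and for $z\in\Sigma^{\geq 2s}$ the unique factorisation $z=a z' b$ with $a,b\in\Sigma^s$ satisfies $z\in L$ iff $[z']_{\equiv_\varphi}\in R_{a,b}:=\set{[w]_{\equiv_\varphi} \mid \varphi(awb)\in P}$ (well defined since $a,b\in\Sigma^{\geq s}$), so $L\cap\Sigma^{\geq 2s}=\bigcup_{a,b\in\Sigma^s} a\,\mu^{-1}(R_{a,b})\,b$. Each $\mu^{-1}(R_{a,b})$ is recognised by $\mu$ into $N\in\V$, hence lies in $\DLang{\V}(\Sigma)$, so Criterion~\ref{ctn:variety_criterion} puts $a\,\mu^{-1}(R_{a,b})\,b$ in $\DLang{\FSVVjoinLI}(\Sigma)$; closing under finite union (allowed, $\DLang{\FSVVjoinLI}$ being a variety of languages) gives $L\in\DLang{\FSVVjoinLI}(\Sigma)$.

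For the converse, assume $\StVEsntl\V=\StVGen{\FSVVjoinLI}{\nem}$, so $\DLang{\StVEsntl\V}=\DLang{\FSVVjoinLI}$. Fix $L\in\DLang{\V}(\Sigma)$ and $x,y\in\Sigma^*$; the case $L=\emptyset$ being trivial, assume $L\neq\emptyset$. It is enough to show that $xLy$ is recognised by some essentially-$\V$ stamp, for then $xLy\in\DLang{\StVEsntl\V}(\Sigma)=\DLang{\FSVVjoinLI}(\Sigma)$, which is Criterion~\ref{ctn:variety_criterion}. Put $\varphi=\eta_L\colon\Sigma^*\to M\in\V$ and $n=\max(\length{x},\length{y})$, and let $\psi=\eta_{xLy}\times\tau$, where $\tau\colon\Sigma^*\to T$ with $T\in\FSVLI$ records prefixes and suffixes of length $n$; then $\psi$ recognises $xLy$ through its first coordinate, and its stability index is at least that of $\tau$, hence at least $n$. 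To see $\psi\in\StVEsntl\V$, I would show $\sim_L$ refines $\equiv_\psi$, so that $\Sigma^*\quotient\equiv_\psi$ divides $M\in\V$ and Proposition~\ref{ptn:Essential_congruence} applies. So let $u\sim_L v$, let $s$ be the stability index of $\psi$, and take $a,b\in\Sigma^{\geq s}$ and $p,q\in\Sigma^*$: the $\tau$-component agrees on $aub$ and $avb$ by the first part (a locally trivial stamp is essentially trivial and $s$ dominates the stability index of $\tau$), while --- since $\length{pa}\geq s\geq n\geq\length{x}$ and likewise $\length{bq}\geq\length{y}$ --- membership of $paubq$ in $xLy$ reduces to a condition on the length-$\length{x}$ prefix of $pa$ and the length-$\length{y}$ suffix of $bq$, not involving $u$, together with $(x^{-1}pa)\,u\,(bq\,y^{-1})\in L$, where $x^{-1}pa$ denotes $pa$ shorn of its prefix $x$ (and similarly $bq\,y^{-1}$), a condition unchanged when $u$ is replaced by $v$. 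So the $\eta_{xLy}$-component also agrees, $\psi(aub)=\psi(avb)$, and $u\equiv_\psi v$.

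I expect the length bookkeeping in this last step to be the main difficulty: the threshold $\Sigma^{\geq s}$ built into $\equiv_\psi$ must dominate $\length{x}$ and $\length{y}$ for the cancellations of $x$ and $y$ to make sense, which is why I would work with $\eta_{xLy}$ padded by a locally trivial stamp of large enough stability index rather than with $\eta_{xLy}$ itself. The remaining ingredients --- the shape of $\DLang{\FSVLI}$-languages and the one-line local-triviality identity of the first paragraph --- are routine.
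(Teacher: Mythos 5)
Your proof is correct, and it ultimately rests on the same two facts as the paper's, but it organises and in places proves them differently. The paper channels everything through a single reusable statement, Proposition~\ref{ptn:Essentially-V_languages} ($\DLang{\StVEsntl\V}(\Sigma)$ is the Boolean closure of the languages $x L y$ with $L \in \DLang{\V}(\Sigma)$), from which both directions and the unconditional inclusion drop out in a few lines (Lemmata~\ref{lem:Join_is_essentially-V} and~\ref{lem:Conditional_essentially-V_is_join}). You instead prove the two halves of that characterisation inline where needed: your ``criterion implies equality'' step is exactly the paper's decomposition $L = \bigcup_{a, b \in \Sigma^s} a\, \mu^{-1}(R_{a,b})\, b \cup F$, and your converse step is the paper's observation that $x L y$ is recognised by an essentially-$\V$ stamp, except that you force the stability index above $\max(\length{x}, \length{y})$ by taking the product $\eta_{xLy} \times \tau$ with a locally trivial stamp, whereas the paper works with the syntactic morphism of $x L y$ directly (and silently handles the case $s < \length{x}$ by replacing a word of $\Sigma^s$ with a $\varphi$-equivalent longer one in $\Sigma^{ks}$ --- your padding trick makes this explicit at the cost of checking that the stability index of a direct product dominates that of each factor, which does hold). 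The genuinely different ingredient is your treatment of the unconditional inclusion: the paper shows $\DLang{\V} \cup \DLang{\FSVLI} \subseteq \DLang{\StVEsntl\V}$ at the language level, while you show $\StVGen{\V}{\allm} \cup \StVGen{\FSVLI}{\nem} \subseteq \StVEsntl\V$ at the stamp level, using the idempotent factorisation of $\varphi(\Sigma^{\geq s})$ and the identity $g t h = g h$ for idempotents $g, h$ and arbitrary $t$ in a finite locally trivial semigroup to conclude that $\equiv_\varphi$ is the universal congruence when $\varphi(\Sigma^+) \in \FSVLI$. That algebraic fact is correct (all idempotents lie in the minimal ideal, which is a rectangular band) and makes this part self-contained; what the paper's detour through Proposition~\ref{ptn:Essentially-V_languages} buys is a characterisation it reuses elsewhere, whereas your version avoids stating it as a separate result.
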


Why this proposition is useful to give characterisations of $\FSVVjoinLI$ in
terms of identities will become clear in the next section. For now, we focus on
its proof, that entirely relies on the following characterisation of the
languages recognised by essentially-$\V$ stamps.

\begin{proposition}
\label{ptn:Essentially-V_languages}
    Let $\V$ be a variety of monoids.
    For any alphabet $\Sigma$, the set $\DLang{\StVEsntl\V}(\Sigma)$ consists of
    all Boolean combinations of languages of the form $x L y$ for
    $L \in \DLang{\V}(\Sigma)$ and $x, y \in \Sigma^*$.
\end{proposition}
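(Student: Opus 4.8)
The plan is to prove the two inclusions separately for a fixed alphabet $\Sigma$; write $\mathcal{C}(\Sigma)$ for the Boolean algebra of languages generated by the $xLy$ with $L\in\DLang{\V}(\Sigma)$ and $x,y\in\Sigma^*$. Recall from Proposition~\ref{ptn:Essentially-V_identities} that $\StVEsntl\V$ is an \nem-variety of stamps, so $\DLang{\StVEsntl\V}(\Sigma)$ is closed under Boolean operations. I would first record that $\Sigma^*$ and $\emptyset$ lie in $\DLang{\V}(\Sigma)$ (their syntactic monoid is trivial, hence in $\V$), so $x\Sigma^*y=\{x\}\Sigma^*\{y\}\in\mathcal{C}(\Sigma)$; combined with the identity $\{w\}=(w\Sigma^*)\setminus\bigcup_{a\in\Sigma}(w\Sigma^*a)$ this shows that every finite language, and more generally the whole of $\DLang{\FSVLI}(\Sigma)$ (the languages $U\Sigma^*V\cup W$ with $U,V,W$ finite), belongs to $\mathcal{C}(\Sigma)$.

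For $\mathcal{C}(\Sigma)\subseteq\DLang{\StVEsntl\V}(\Sigma)$ it is enough, by Boolean closure of the right-hand side, to show $xLy\in\DLang{\StVEsntl\V}(\Sigma)$ when $L\in\DLang{\V}(\Sigma)$ and $x,y\in\Sigma^*$. Let $\eta\colon\Sigma^*\to M$ be the syntactic morphism of $L$ (so $M\in\V$ and $L=\eta^{-1}(F)$ with $F=\eta(L)$), put $k=\length{x}$, $l=\length{y}$, and let $\eta_{xLy}$ be the syntactic morphism of the regular language $xLy$. The key is the positional description of $xLy$: a word $w$ with $\length{w}\geq k+l$ lies in $xLy$ if and only if its length-$k$ prefix is $x$, its length-$l$ suffix is $y$, and $\eta$ sends into $F$ the word obtained from $w$ by deleting those two blocks. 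From this one gets that for all $p_0,q_0\in\Sigma^*$ with $\length{p_0}\geq k$, $\length{q_0}\geq l$ and all $u,v\in\Sigma^*$ with $\eta(u)=\eta(v)$, we have $\eta_{xLy}(p_0uq_0)=\eta_{xLy}(p_0vq_0)$: for every $a,b\in\Sigma^*$ the words $ap_0uq_0b$ and $ap_0vq_0b$ share their length-$k$ prefix and length-$l$ suffix (which lie inside $ap_0$ and $q_0b$ and do not touch the middle), while their middle blocks are $p_0'uq_0'$ and $p_0'vq_0'$ for fixed $p_0',q_0'$, with $\eta(p_0')\eta(u)\eta(q_0')=\eta(p_0')\eta(v)\eta(q_0')$. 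I would then deduce $\Sigma^*\quotient\equiv_{\eta_{xLy}}\in\V$: applying Claim~\ref{clm:Essential_congruence} to $\varphi=\eta_{xLy}$, it suffices to check that $\eta(u)=\eta(v)$ implies $\widehat{\eta_{xLy}}(\alpha^\omega\beta u\gamma\delta^\omega)=\widehat{\eta_{xLy}}(\alpha^\omega\beta v\gamma\delta^\omega)$ for all $\alpha,\beta,\gamma,\delta\in\Sigma^+$; since $M$ is finite, these profinite values equal $\eta_{xLy}(\alpha^{N}\beta u\gamma\delta^{N})$ and $\eta_{xLy}(\alpha^{N}\beta v\gamma\delta^{N})$ for $N=n!$ with $n$ large, and for such $n$ one has $\length{\alpha^{N}\beta}\geq k$ and $\length{\gamma\delta^{N}}\geq l$, so the previous paragraph applies. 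Hence $w\mapsto[w]_{\equiv_{\eta_{xLy}}}$ factors through $\eta$, exhibiting $\Sigma^*\quotient\equiv_{\eta_{xLy}}$ as a quotient of $M\in\V$, and Proposition~\ref{ptn:Essential_congruence} gives $\eta_{xLy}\in\StVEsntl\V$, i.e.\ $xLy\in\DLang{\StVEsntl\V}(\Sigma)$.

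For the converse, take $L\in\DLang{\StVEsntl\V}(\Sigma)$ with syntactic morphism $\varphi=\eta_L\colon\Sigma^*\to M$ in $\StVEsntl\V$, stability index $s$, and $L=\varphi^{-1}(F)$. By Proposition~\ref{ptn:Essential_congruence}, $N:=\Sigma^*\quotient\equiv_\varphi\in\V$; let $\mu\colon\Sigma^*\to N$ send $w$ to $[w]_{\equiv_\varphi}$. Split $L=(L\cap\Sigma^{<2s})\cup(L\cap\Sigma^{\geq 2s})$; the first part is finite, hence in $\mathcal{C}(\Sigma)$ by the first paragraph. For the second, write each $w\in\Sigma^{\geq 2s}$ uniquely as $w=p\,m\,q$ with $p,q\in\Sigma^s$. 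Since $p,q\in\Sigma^{\geq s}$, the definition of $\equiv_\varphi$ shows that $\varphi(pmq)$ — and hence membership of $w$ in $L$ — depends only on $(p,\mu(m),q)\in\Sigma^s\times N\times\Sigma^s$. Letting $\mathcal{A}$ be the finite set of such triples realised by some $w\in L\cap\Sigma^{\geq 2s}$, one obtains $L\cap\Sigma^{\geq 2s}=\bigcup_{(p,n,q)\in\mathcal{A}}\{p\}\,\mu^{-1}(n)\,\{q\}$. Each $\mu^{-1}(n)$ is recognised by the morphism $\mu$ into $N\in\V$, so $\mu^{-1}(n)\in\DLang{\V}(\Sigma)$ and $\{p\}\,\mu^{-1}(n)\,\{q\}$ has the form $xKy$ with $K\in\DLang{\V}(\Sigma)$. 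Thus $L\cap\Sigma^{\geq 2s}\in\mathcal{C}(\Sigma)$ and $L\in\mathcal{C}(\Sigma)$.

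The step I expect to be the main obstacle is showing $\Sigma^*\quotient\equiv_{\eta_{xLy}}\in\V$ in the first inclusion: the positional argument only controls $\eta_{xLy}$ on contexts $p_0uq_0$ with $p_0,q_0$ at least as long as $x$ and $y$, whereas $\equiv_{\eta_{xLy}}$ is defined through contexts of length the stability index of $\eta_{xLy}$, which may be shorter, and Claim~\ref{clm:Essential_congruence} — whose contexts $\alpha^\omega\beta\cdot\gamma\delta^\omega$ are effectively unbounded — is precisely what bridges this gap. One can alternatively avoid the Claim by building an explicit stamp recognising $xLy$ that records the length-$k$ prefix, the length-$l$ suffix and the $\eta$-image of the middle block; its stability index is then at least $k+l$ by construction, so the definition of essentially-$\V$ can be verified directly with witness $\eta$.
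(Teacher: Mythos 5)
Your proof is correct, and its overall architecture coincides with the paper's: the same positional description of $x L y$ (length-$\length{x}$ prefix, length-$\length{y}$ suffix, $\eta$-image of the middle block) drives the first inclusion, and the same decomposition $L = \bigcup_{(p, n, q)} p\, \mu^{-1}(n)\, q \cup F$ with $p, q \in \Sigma^s$ and $F \subseteq \Sigma^{< 2s}$ drives the second, which is essentially identical to the paper's. The one genuine divergence is how you close the inclusion $\mathcal{C}(\Sigma) \subseteq \DLang{\StVEsntl\V}(\Sigma)$. The paper verifies the definition of essentially-$\V$ directly for the syntactic morphism $\varphi$ of $x L y$, taking the syntactic morphism of $L$ as witness; the passage from ``contexts of length at least $\length{x}$, $\length{y}$'' to ``contexts of length exactly the stability index $s$'' is compressed into the phrase ``by definition of the stability index'', the intended argument being that any $x' \in \Sigma^s$ can be replaced by some $x'' \in \Sigma^{n s}$ with $\varphi(x'') = \varphi(x')$ and $n s \geq \length{x}$. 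You instead route through the congruence $\equiv_{\eta_{x L y}}$, Claim~\ref{clm:Essential_congruence} (whose contexts $\alpha^\omega \beta \cdot \gamma \delta^\omega$ are long enough by construction) and Proposition~\ref{ptn:Essential_congruence}. Both arguments are sound; yours has the merit of making the length subtlety explicit — and your suggested alternative of exhibiting a recognising stamp whose stability index is at least $\length{x} + \length{y}$ is a valid third route — at the mild cost of importing the profinite claim into an otherwise purely combinatorial half of the proof.
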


\begin{proof}
    Let $\LVariety{C}$ be the class of languages such that for any alphabet
    $\Sigma$, the set $\LVariety{C}(\Sigma)$ consists of all Boolean
    combinations of languages of the form $x L y$ for $L \in \DLang{\V}(\Sigma)$
    and $x, y \in \Sigma^*$.

    Let $\Sigma$ be an alphabet. We need to show that
    $\DLang{\StVEsntl\V}(\Sigma) = \LVariety{C}(\Sigma)$.

    \proofsubparagraph{Inclusion from right to left.}
    Let $L \in \DLang{\V}(\Sigma)$ and $x, y \in \Sigma^*$.
    Let $\mu\colon \Sigma^* \to N$ be the syntactic morphism of $L$: this
    implies that $N \in \V$ and that there exists $F \subseteq N$ such that
    $L = \mu^{-1}(F)$.
    Let also $\varphi\colon \Sigma^* \to M$ be the syntactic morphism of the
    language
    $x L y =
     x \Sigma^* y \cap \Sigma^{\length{x}} \mu^{-1}(F) \Sigma^{\length{y}}$
    and let $s$ be its stability index.
    We then consider $u, v \in \Sigma^*$ such that $\mu(u) = \mu(v)$. Take any
    $x', y' \in \Sigma^*$ such that $\length{x'} \geq \length{x}$ and
    $\length{y'} \geq \length{y}$.
    We clearly have that $x' u y' \in x \Sigma^* y$ if and only if
    $x' v y' \in x \Sigma^* y$. Moreover, $x' = x'_1 x'_2$ for
    $x'_1 \in \Sigma^{\length{x}}$ and $x'_2 \in \Sigma^*$ and $y' = y'_1 y'_2$
    for $y'_1 \in \Sigma^*$ and $y'_2 \in \Sigma^{\length{y}}$, so that
    \begin{align*}
	x' u y' \in \Sigma^{\length{x}} \mu^{-1}(F) \Sigma^{\length{y}}
	& \Leftrightarrow \mu(x'_2 u y'_1) \in F\\
	& \Leftrightarrow \mu(x'_2 v y'_1) \in F\\
	& \Leftrightarrow x' v y' \in
	  \Sigma^{\length{x}} \mu^{-1}(F) \Sigma^{\length{y}}
	\displaypunct{.}
    \end{align*}
    Hence, $x' u y' \in x L y$ if and only if $x' v y' \in x L y$ for all
    $x', y' \in \Sigma^*$ such that $\length{x'} \geq \length{x}$ and
    $\length{y'} \geq \length{y}$, so that, by definition of the stability index
    $s$ of $\varphi$ and as $\varphi$ is the syntactic morphism of $x L y$, we
    have $\varphi(x' u y') = \varphi(x' v y')$ for all $x', y' \in \Sigma^s$.
    Thus, it follows that $\varphi \in \StVEsntl\V$.

    This implies that $x L y \in \DLang{\StVEsntl\V}(\Sigma)$. Therefore, since
    this is true for any $L \in \DLang{\V}(\Sigma)$ and $x, y \in \Sigma^*$ and
    since $\DLang{\StVEsntl\V}(\Sigma)$ is closed under Boolean operations, we
    can conclude that
    $\LVariety{C}(\Sigma) \subseteq \DLang{\StVEsntl\V}(\Sigma)$.

    \proofsubparagraph{Inclusion from left to right.}
    Let $L \in \DLang{\StVEsntl\V}(\Sigma)$ and let
    $\varphi\colon \Sigma^* \to M$ be its syntactic morphism: it is an
    essentially-$\V$ stamp. Given $s$ its stability index, this means there
    exists a stamp $\mu\colon \Sigma^* \to N$ with $N \in \V$ such that for all
    $u, v \in \Sigma^*$, we have
    \[
	\mu(u) = \mu(v) \Rightarrow
	\bigl(\varphi(x u y) = \varphi(x v y) \quad
	      \forall x, y \in \Sigma^s\bigr)
	\displaypunct{.}
    \]

    For each $m \in N$ and $x, y \in \Sigma^s$ consider the language
    $x \mu^{-1}(m) y$. For any two words $w, w' \in x \mu^{-1}(m) y$, we have
    $w = x u y$ and $w' = x v y$ with $\mu(u) = \mu(v) = m$, so that
    $\varphi(w) = \varphi(w')$. By definition of the syntactic morphism, this
    means that for all $m \in N$ and $x, y \in \Sigma^s$, either
    $x \mu^{-1}(m) y \subseteq L$ or $x \mu^{-1}(m) y \cap L = \emptyset$.
    Therefore, there exists a set
    $E \subseteq N \times \Sigma^s \times \Sigma^s$ such that
    $L \cap \Sigma^{\geq 2 s} = \bigcup_{(m, x, y) \in E} x \mu^{-1}(m) y$,
    hence
    \[
	L = \bigcup_{(m, x, y) \in E} x \mu^{-1}(m) y \cup F
    \]
    for a certain $F \subseteq \Sigma^{< 2 s}$.

    Take $w \in F$. We have that
    $\set{w} =
     w \Sigma^* \cap \bigcap_{a \in \Sigma} (\Sigma^* \setminus w a \Sigma^*)$
    with $\Sigma^* \in \DLang{\V}(\Sigma)$. Thus, the singleton
    language $\set{w}$ belongs to $\LVariety{C}(\Sigma)$ and since this is true
    for any $w \in F$ and $F$ is finite, we can deduce from this that $F$ is in
    $\LVariety{C}(\Sigma)$, as the latter is trivially closed under Boolean
    operations.

    Now, for all $m \in N$, the language $\mu^{-1}(m)$ belongs to
    $\DLang{\V}(\Sigma)$, so we finally have $L \in \LVariety{C}(\Sigma)$.
    This is true for any $L \in \DLang{\StVEsntl\V}(\Sigma)$, so in conclusion,
    $\DLang{\StVEsntl\V}(\Sigma) \subseteq \LVariety{C}(\Sigma)$.
\end{proof}

Proposition~\ref{ptn:Essentially-V-V_join_LI} then follows from the two next
lemmata, that are both easy consequences of
Proposition~\ref{ptn:Essentially-V_languages}. For completeness, we give the
proofs in the appendix.

\begin{lemma}
\label{lem:Join_is_essentially-V}
    Let $\V$ be a variety of monoids.
    Then $\StVGen{\FSVVjoinLI}{\nem} \subseteq \StVEsntl\V$.
\end{lemma}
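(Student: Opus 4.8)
The goal is to show $\StVGen{\FSVVjoinLI}{\nem} \subseteq \StVEsntl\V$. The natural plan is to go through languages: by the correspondence between \nem-varieties of stamps and \nem-varieties of languages, it suffices to show the inclusion $\DLang{\FSVVjoinLI} \subseteq \DLang{\StVEsntl\V}$ at the level of languages, i.e.\ that for each alphabet $\Sigma$ we have $\DLang{\FSVVjoinLI}(\Sigma) \subseteq \DLang{\StVEsntl\V}(\Sigma)$.

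First I would invoke Proposition~\ref{ptn:V_join_LI_ne-variety}, which tells us that $\DLang{\FSVVjoinLI}(\Sigma)$ is exactly the Boolean closure of $\DLang{\V}(\Sigma) \cup \DLang{\FSVLI}(\Sigma)$. So the task reduces to checking that both generating families lie in $\DLang{\StVEsntl\V}(\Sigma)$, which by Proposition~\ref{ptn:Essentially-V_languages} is the Boolean closure of the languages $x L y$ with $L \in \DLang{\V}(\Sigma)$ and $x, y \in \Sigma^*$. For $\DLang{\V}(\Sigma)$ this is immediate, taking $x = y = \emptyword$. For $\DLang{\FSVLI}(\Sigma)$, recall from the preliminaries that these languages are Boolean combinations of $u\Sigma^*$ and $\Sigma^* u$ for $u \in \Sigma^*$; since $\Sigma^* \in \DLang{\V}(\Sigma)$ (the trivial monoid is in every variety), we have $u\Sigma^* = u \cdot \Sigma^* \cdot \emptyword$ and $\Sigma^* u = \emptyword \cdot \Sigma^* \cdot u$, both of the required form $xLy$.

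Putting it together: every generator of $\DLang{\FSVVjoinLI}(\Sigma)$ lies in $\DLang{\StVEsntl\V}(\Sigma)$, and the latter is closed under Boolean operations (being the set of languages recognised by an \nem-variety of stamps — here one should note $\StVEsntl\V$ is indeed an \nem-variety of stamps, which follows from Proposition~\ref{ptn:Essentially-V_identities}, or alternatively just use that $\DLang{\StVEsntl\V}(\Sigma)$ is a Boolean closure by Proposition~\ref{ptn:Essentially-V_languages}); hence the Boolean closure of the generators, namely all of $\DLang{\FSVVjoinLI}(\Sigma)$, is contained in $\DLang{\StVEsntl\V}(\Sigma)$. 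Since this holds for every alphabet $\Sigma$, the language-level inclusion $\DLang{\FSVVjoinLI} \subseteq \DLang{\StVEsntl\V}$ holds, and transporting back through the Eilenberg-type correspondence for \nem-varieties of stamps gives $\StVGen{\FSVVjoinLI}{\nem} \subseteq \StVEsntl\V$.

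There is no real obstacle here; the only point requiring a moment's care is the bookkeeping around which correspondence is being used (\nem-varieties of stamps versus \nem-varieties of languages, and the fact that $\StVGen{\FSVVjoinLI}{\nem}$ is genuinely an \nem-variety of stamps so that it corresponds to $\DLang{\FSVVjoinLI}$), together with noting that $\Sigma^* \in \DLang{\V}(\Sigma)$ so that the $\FSVLI$-generators are captured. Everything else is a direct application of the two previously established propositions.
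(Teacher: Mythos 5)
Your proof is correct and follows essentially the same route as the paper's: reduce to the language level, observe that $\DLang{\V}(\Sigma)$ and $\DLang{\FSVLI}(\Sigma)$ both sit inside $\DLang{\StVEsntl\V}(\Sigma)$ via Proposition~\ref{ptn:Essentially-V_languages} (using $\Sigma^* \in \DLang{\V}(\Sigma)$ for the $\FSVLI$ generators), and conclude by Proposition~\ref{ptn:V_join_LI_ne-variety} together with closure under Boolean operations. The only cosmetic difference is that you invoke the Boolean-closure description of $\DLang{\FSVVjoinLI}(\Sigma)$ where the paper invokes its inclusion-wise minimality; these are two parts of the same proposition and yield the same argument.
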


\begin{lemma}
\label{lem:Conditional_essentially-V_is_join}
    Let $\V$ be a variety of monoids.
    Then $\StVEsntl\V \subseteq \StVGen{\FSVVjoinLI}{\nem}$ if and only if $\V$
    verifies criterion~\ref{ctn:variety_criterion}.
\end{lemma}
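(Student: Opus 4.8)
The plan is to transport the whole statement to the level of varieties of languages, where it becomes an essentially immediate consequence of Proposition~\ref{ptn:Essentially-V_languages}. First I would recall that $\StVEsntl\V$ is an \nem-variety of stamps by Proposition~\ref{ptn:Essentially-V_identities} (any set of identities \nem-defining $\StVGen{\V}{\allm}$ yields one \nem-defining $\StVEsntl\V$, and such a set exists since $\StVGen{\V}{\allm}$, being an \allm-variety, is in particular an \nem-variety), and that $\DLang{\FSVVjoinLI}$ is the \nem-variety of languages corresponding to $\StVGen{\FSVVjoinLI}{\nem}$ by Proposition~\ref{ptn:V_join_LI_ne-variety}. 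Hence, by the correspondence between \nem-varieties of stamps and \nem-varieties of languages (and the fact that inclusion of the former is equivalent to inclusion of the latter), $\StVEsntl\V \subseteq \StVGen{\FSVVjoinLI}{\nem}$ holds if and only if $\DLang{\StVEsntl\V}(\Sigma) \subseteq \DLang{\FSVVjoinLI}(\Sigma)$ for every alphabet $\Sigma$. So it suffices to show that this family of inclusions is equivalent to criterion~\ref{ctn:variety_criterion}.

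For the ``if'' part, assuming criterion~\ref{ctn:variety_criterion}, I would fix an alphabet $\Sigma$ and an arbitrary $K \in \DLang{\StVEsntl\V}(\Sigma)$. By Proposition~\ref{ptn:Essentially-V_languages}, $K$ is a Boolean combination of languages of the form $x L y$ with $L \in \DLang{\V}(\Sigma)$ and $x, y \in \Sigma^*$; by the criterion, each such $x L y$ lies in $\DLang{\FSVVjoinLI}(\Sigma)$, and since the latter is closed under Boolean operations (being a variety of languages), so is $K$. For the ``only if'' part, assuming $\StVEsntl\V \subseteq \StVGen{\FSVVjoinLI}{\nem}$, I would take any $L \in \DLang{\V}(\Sigma)$ and $x, y \in \Sigma^*$: Proposition~\ref{ptn:Essentially-V_languages} gives $x L y \in \DLang{\StVEsntl\V}(\Sigma)$ directly (it is one of the generating languages), hence $x L y \in \DLang{\FSVVjoinLI}(\Sigma) = \DLang{\V \FSVjoin \FSVLI}(\Sigma)$, which is exactly criterion~\ref{ctn:variety_criterion}.

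There is no real obstacle here once Proposition~\ref{ptn:Essentially-V_languages} is available: the only points requiring a little care are the bookkeeping around \nem-varieties --- checking that $\StVEsntl\V$ is an \nem-variety of stamps and that the language class appearing in criterion~\ref{ctn:variety_criterion} is precisely the one corresponding to $\StVGen{\FSVVjoinLI}{\nem}$ --- both of which are handed to us by Proposition~\ref{ptn:Essentially-V_identities} and Proposition~\ref{ptn:V_join_LI_ne-variety} respectively. The argument looks symmetric but the two directions use Proposition~\ref{ptn:Essentially-V_languages} in complementary ways: the ``if'' direction uses that every language recognised by an essentially-$\V$ stamp is a Boolean combination of the languages $x L y$, while the ``only if'' direction uses that each such $x L y$ is itself recognised by an essentially-$\V$ stamp.
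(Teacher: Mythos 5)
Your proposal is correct and follows essentially the same route as the paper: both directions rest on Proposition~\ref{ptn:Essentially-V_languages} combined with the equivalence between inclusion of \nem-varieties of stamps and inclusion of the corresponding \nem-varieties of languages, with the ``only if'' direction using that each $x L y$ lies in $\DLang{\StVEsntl\V}(\Sigma)$ and the ``if'' direction using the Boolean-combination description together with closure of $\DLang{\FSVVjoinLI}(\Sigma)$ under Boolean operations. Your extra bookkeeping (that $\StVEsntl\V$ is indeed an \nem-variety of stamps, via Proposition~\ref{ptn:Essentially-V_identities}) is a point the paper leaves implicit but is entirely consistent with its argument.
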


\section{Applications}
\label{sec:Applications}

In this last section, we use the link between essentially-$\V$ stamps and
$\FSVVjoinLI$ to reprove some characterisations of joins between $\FSVLI$ and
some well-known varieties of monoids in terms of identities.

One thing seems at first glance a bit problematic about proving that a variety
of monoids $\V$ satisfies criterion~\ref{ctn:variety_criterion}. Indeed, to this
end, one needs to prove that certain languages belong to $\DLang{\FSVVjoinLI}$;
however, this poses a problem when one's goal is precisely to characterise
$\FSVVjoinLI$, because one shall a priori not know more about
$\DLang{\FSVVjoinLI}$ than what is given by
Proposition~\ref{ptn:V_join_LI_ne-variety}.
Nevertheless, there is a natural sufficient condition for
criterion~\ref{ctn:variety_criterion} to hold that depends only on $\DLang{\V}$:
if given any language $L \in \DLang{\V}(\Sigma)$ and any $x, y \in \Sigma^*$
with $\Sigma$ an alphabet, there exists a language $K \in \DLang{\V}(\Sigma)$
such that $L$ is equal to the quotient $x^{-1} K y^{-1}$, then $\V$ verifies
criterion~\ref{ctn:variety_criterion}. We don't know whether this
quotient-expressibility condition that solely depends on the variety $\V$
(without explicit reference to $\FSVLI$) is actually equivalent to it satisfying
criterion~\ref{ctn:variety_criterion}, but we can prove such an equivalence for
a weaker quotient-expressibility condition for $\V$. The proof is to be found in
the appendix.

\begin{proposition}
\label{ptn:Variety_criterion}
    Let $\V$ be a variety of monoids.
    Then $\V$ satisfies criterion~\ref{ctn:variety_criterion} if and only if for
    any $L \in \DLang{\V}(\Sigma)$ and any $x, y \in \Sigma^*$ with $\Sigma$ an
    alphabet, there exist $k, l \in \N$ such that for all
    $u \in \Sigma^k, v \in \Sigma^l$, there exists a language
    $K \in \DLang{\V}(\Sigma)$ verifying
    $u^{-1} L v^{-1} = (x u)^{-1} K (v y)^{-1}$.
\end{proposition}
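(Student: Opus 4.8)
The plan is to prove both implications using Proposition~\ref{ptn:Essentially-V_languages} (or equivalently Lemma~\ref{lem:Conditional_essentially-V_is_join} and Lemma~\ref{lem:Join_is_essentially-V}) together with the description of $\DLang{\FSVVjoinLI}(\Sigma)$ as the Boolean closure of $\DLang{\V}(\Sigma) \cup \DLang{\FSVLI}(\Sigma)$ given in Proposition~\ref{ptn:V_join_LI_ne-variety}. Throughout, recall that a language in $\DLang{\FSVLI}(\Sigma)$ is a Boolean combination of languages $u\Sigma^*$ and $\Sigma^* u$; in particular it is, outside of some bounded-length set of short words, a finite union of sets of the form $p\Sigma^* s$ with $p,s$ of fixed length, which is exactly the kind of ``boundary data'' that criterion~\ref{ctn:variety_criterion} controls.

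For the direction from the quotient-expressibility condition to criterion~\ref{ctn:variety_criterion}, I would fix $L \in \DLang{\V}(\Sigma)$ and $x,y \in \Sigma^*$ and aim to show $xLy \in \DLang{\FSVVjoinLI}(\Sigma)$. Let $m = \length{x}$ and $n = \length{y}$. The key observation is that a word $w$ lies in $xLy$ iff $w$ begins with $x$, ends with $y$, and the ``middle part'' (obtained by stripping $x$ and $y$) lies in $L$; the difficulty is that to express this algebraically one must relate the middle part to $w$ without literally cutting. I would instead write, for words long enough, a membership test by reading a prefix of length $m+k$ and a suffix of length $n+l$ for suitable $k,l$ coming from the hypothesis, decomposing $w = x u w' v y$ with $\length{u}=k$, $\length{v}=l$, and noting $w \in xLy \iff w' \in u^{-1}Lv^{-1} = (xu)^{-1}K(vy)^{-1}$ for the language $K \in \DLang{\V}(\Sigma)$ supplied by the hypothesis; then $w \in xLy \iff xuw'vy \in K \iff w \in K$, i.e.\ $w \in xLy \iff w \in K$. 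Summing over the finitely many choices of $(u,v) \in \Sigma^k \times \Sigma^l$, one gets that $xLy$ agrees with a finite union $\bigcup_{(u,v)} \bigl(x u \Sigma^* v y \cap K_{u,v}\bigr)$ on $\Sigma^{\geq m+n+k+l}$, up to a finite correction set of short words; since $x u \Sigma^* v y \in \DLang{\FSVLI}(\Sigma)$, each $K_{u,v} \in \DLang{\V}(\Sigma)$, finite languages are in $\DLang{\FSVLI}(\Sigma)$, and $\DLang{\FSVVjoinLI}(\Sigma)$ is the Boolean closure of $\DLang{\V}(\Sigma) \cup \DLang{\FSVLI}(\Sigma)$, we conclude $xLy \in \DLang{\FSVVjoinLI}(\Sigma)$.

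For the converse, assume $\V$ satisfies criterion~\ref{ctn:variety_criterion}; I want to produce, for given $L \in \DLang{\V}(\Sigma)$ and $x,y \in \Sigma^*$, the integers $k,l$ and the languages $K$. Applying the criterion gives $xLy \in \DLang{\FSVVjoinLI}(\Sigma)$, hence $xLy = B(\mathcal{K}, \mathcal{N})$, a Boolean combination of finitely many languages $\mathcal{K} \subseteq \DLang{\V}(\Sigma)$ and $\mathcal{N} \subseteq \DLang{\FSVLI}(\Sigma)$. Each language in $\mathcal{N}$, by the description of $\DLang{\FSVLI}$, is determined by prefixes and suffixes of some bounded length; let $l_0$ be a common bound, and choose $k,l \geq l_0$ large enough that moreover every $\mathcal{N}$-language is, on words of length $\geq \length{x}+\length{y}+k+l$, a union of classes $p\Sigma^* q$ with $\length p = \length x + k$, $\length q = \length y + l$. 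Then for fixed $u \in \Sigma^k$, $v \in \Sigma^l$, the prefix $xu$ and suffix $vy$ pin down which of the $\mathcal{N}$-languages a word $w = xuw'vy$ lies in, so on such words the Boolean combination collapses to a fixed Boolean combination of the $\mathcal{K}$-languages, intersected with $xu\Sigma^* vy$; call that Boolean combination $K' \in \DLang{\V}(\Sigma)$. One then checks $u^{-1}Lv^{-1} = (xu)^{-1}(xLy)(vy)^{-1}$ (a direct unfolding of the definition of quotient, valid because $xu$ is a prefix and $vy$ a suffix of length matching) and that this equals $(xu)^{-1}K'(vy)^{-1}$; the only subtlety is the short words of $xLy$ not covered by the length bound, but these contribute nothing to $u^{-1}Lv^{-1}$ once $k,l$ exceed their length, so they can be absorbed. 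Setting $K = K'$ finishes this direction.

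The main obstacle I anticipate is the bookkeeping in the converse: making the choice of $k$ and $l$ large enough to simultaneously (i) dominate the prefix/suffix lengths appearing in all the finitely many $\FSVLI$-languages in the Boolean combination, and (ii) exceed the length of the finitely many ``exceptional'' short words in $xLy$, and then verifying carefully that after fixing $u,v$ the Boolean combination really does become a single $\DLang{\V}$-language independent of $w'$. This is conceptually routine — it is the standard ``$\FSVLI$-information is eventually constant once enough of the boundary is read'' argument — but it requires care to state the length thresholds precisely and to handle the interaction between the quotient operation and the finite correction term.
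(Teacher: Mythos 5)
Your proposal is correct and follows essentially the same route as the paper's proof: both directions rest on the decomposition of long words into $x u w' v y$ blocks, the identity $u^{-1}Lv^{-1} = (xu)^{-1}(xLy)(vy)^{-1}$, the description of $\DLang{\FSVVjoinLI}(\Sigma)$ as the Boolean closure of $\DLang{\V}(\Sigma) \cup \DLang{\FSVLI}(\Sigma)$ from Proposition~\ref{ptn:V_join_LI_ne-variety}, and the fact that membership in an $\FSVLI$-language is eventually determined by bounded-length prefixes and suffixes. The bookkeeping you flag (choosing $k,l$ large enough and absorbing the finite set of short words, which contribute nothing to the quotients) is handled in the paper exactly as you describe.
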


This quotient-expressibility condition appears to be particularly useful to
prove that a variety of monoids $\V$ does not satisfy
criterion~\ref{ctn:variety_criterion} without needing to understand what
$\DLang{\FSVVjoinLI}$ is. We demonstrate this for the variety of finite
commutative and idempotent monoids $\FMVJ[1]$.

\begin{proposition}
    $\FMVJ[1]$ does not satisfy criterion~\ref{ctn:variety_criterion}.
\end{proposition}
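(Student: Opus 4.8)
The plan is to deduce the statement from Proposition~\ref{ptn:Variety_criterion}: since that proposition shows criterion~\ref{ctn:variety_criterion} to be equivalent to a quotient-expressibility condition on $\DLang{\FMVJ[1]}$, it suffices to show that this condition fails, i.e.\ to exhibit an alphabet $\Sigma$, a language $L \in \DLang{\FMVJ[1]}(\Sigma)$ and words $x, y \in \Sigma^*$ such that for \emph{every} $k, l \in \N$ there are $u \in \Sigma^k$ and $v \in \Sigma^l$ for which $u^{-1} L v^{-1} \neq (xu)^{-1} K (vy)^{-1}$ for \emph{all} $K \in \DLang{\FMVJ[1]}(\Sigma)$. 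The one structural fact about $\FMVJ[1]$ I would use is the classical description of its variety of languages (see~\cite{Books/Pin-1986}): for every alphabet $\Sigma$, $\DLang{\FMVJ[1]}(\Sigma)$ is the Boolean algebra generated by the languages $\Sigma^* a \Sigma^*$, $a \in \Sigma$; equivalently, whether a word $w$ belongs to a given $K \in \DLang{\FMVJ[1]}(\Sigma)$ depends only on the set of letters $\alphabet(w)$ occurring in $w$, so that $\alphabet(w_1) = \alphabet(w_2)$ forces $w_1 \in K \Leftrightarrow w_2 \in K$.

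Concretely, I would take $\Sigma = \set{a, b}$, $L = \Sigma^* a \Sigma^*$ (which is one of the generators just mentioned, hence lies in $\DLang{\FMVJ[1]}(\Sigma)$), $x = a$ and $y = \emptyword$. Given $k, l \in \N$, the choice $u = b^k$, $v = b^l$ should do the job. Since the letter $a$ occurs in neither $u$ nor $v$, one computes $u^{-1} L v^{-1} = \set{w \in \Sigma^* \mid a \in \alphabet(b^k w b^l)} = \set{w \in \Sigma^* \mid a \in \alphabet(w)} = \Sigma^* a \Sigma^*$; in particular the one-letter word $a$ lies in $u^{-1} L v^{-1}$ while $\emptyword$ does not. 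Now let $K \in \DLang{\FMVJ[1]}(\Sigma)$ be arbitrary. Here $xu = a b^k$ and $vy = b^l$, so $\emptyword \in (xu)^{-1} K (vy)^{-1}$ iff $(xu)(vy) = a b^{k+l} \in K$, and $a \in (xu)^{-1} K (vy)^{-1}$ iff $(xu)a(vy) = a b^k a b^l \in K$. But $\alphabet(a b^{k+l}) = \alphabet(a b^k a b^l)$ for all $k, l \in \N$ (this common set is $\set{a}$ when $k = l = 0$ and $\set{a, b}$ otherwise), so by the property recalled above $a b^{k+l} \in K \Leftrightarrow a b^k a b^l \in K$; hence $(xu)^{-1} K (vy)^{-1}$ contains $\emptyword$ and $a$ simultaneously, or neither. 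Since $u^{-1} L v^{-1}$ contains $a$ but not $\emptyword$, the two languages differ, and as $K$, $k$ and $l$ were arbitrary, Proposition~\ref{ptn:Variety_criterion} yields that $\FMVJ[1]$ does not satisfy criterion~\ref{ctn:variety_criterion}.

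I do not anticipate a genuine obstacle here: the argument is short once Proposition~\ref{ptn:Variety_criterion} and the description of $\DLang{\FMVJ[1]}$ are in place. The only point requiring a little care is to make one fixed separating pair of words --- here $(\emptyword, a)$, tested via $a b^{k+l}$ and $a b^k a b^l$ --- work \emph{uniformly} for all $k, l$, including the degenerate case $u = v = \emptyword$; this is exactly why both witnesses are built so as to have the same set of letters. (Alternatively one could argue directly that $a L = a \Sigma^* a \Sigma^*$ is not in $\DLang{\FMVJ[1] \FSVjoin \FSVLI}(\Sigma)$ by a short bounded-window argument on the Boolean combinations of the languages $\Sigma^* a \Sigma^*$, $a \in \Sigma$, and $u \Sigma^*$, $\Sigma^* u$, which describe that class by Proposition~\ref{ptn:V_join_LI_ne-variety}; but the quotient route above is cleaner and, as stressed before the statement, avoids having to know anything about $\DLang{\FMVJ[1] \FSVjoin \FSVLI}$.)
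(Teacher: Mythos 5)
Your proof is correct and follows essentially the same route as the paper: both apply Proposition~\ref{ptn:Variety_criterion} to a generator $\Sigma^* c \Sigma^*$ of $\DLang{\FMVJ[1]}(\set{a,b})$ with a one-letter $x$ and $u, v$ powers of the other letter, and both exhibit a pair of words that have the same letter content once $xu$ and $vy$ are attached (hence cannot be separated by any $K \in \DLang{\FMVJ[1]}$) but are separated by $u^{-1} L v^{-1}$. The only differences are cosmetic: a swap of the roles of $a$ and $b$, and the separating pair $(\emptyword, a)$ instead of the paper's $(a, ab)$.
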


\begin{proof}
    Given an alphabet $\Sigma$, the set $\DLang{\FMVJ[1]}(\Sigma)$ consists of
    all Boolean combinations of languages of the form $\Sigma^* a \Sigma^*$ for
    $a \in \Sigma$ (see~\cite[Chapter~2, Proposition~3.10]{Books/Pin-1986}).

    Let $L = \set{a, b}^* b \set{a, b}^* \in \DLang{\FMVJ[1]}(\set{a, b})$ and
    $x = b, y = \emptyword$.
    Take any $k, l \in \N$ and set $u = a^k$ and $v = a^l$.
    Consider a $K \in \DLang{\FMVJ[1]}(\set{a, b})$. We have that
    $x u a v y \in K \Leftrightarrow x u a b v y \in K$ so that
    $a \in (x u)^{-1} K (v y)^{-1} \Leftrightarrow
     a b \in (x u)^{-1} K (v y)^{-1}$.
    But $a \notin u^{-1} L v^{-1}$ and $a b \in u^{-1} L v^{-1}$, hence
    $u^{-1} L v^{-1} \neq (x u)^{-1} K (v y)^{-1}$ and this holds for any choice
    of $K$.
    So for any $k, l \in \N$, there exists $u \in \Sigma^k, v \in \Sigma^l$ such
    that no $K \in \DLang{\FMVJ[1]}(\set{a, b})$ verifies
    $u^{-1} L v^{-1} = (x u)^{-1} K (v y)^{-1}$.

    In conclusion, by Proposition~\ref{ptn:Variety_criterion}, $\FMVJ[1]$ does
    not satisfy criterion~\ref{ctn:variety_criterion}.
\end{proof}

We now prove the announced characterisations of joins between $\FSVLI$ and some
well-known varieties of monoids in terms of identities.

\begin{theorem}
    We have the following.
    \begin{enumerate}
	\item
	\label{thm:R_join_LI}
	    $\StVGen{\FMVR \FSVjoin \FSVLI}{\nem} = \StVEsntl\FMVR =
	     \StVidentities{x^\omega y (a b)^\omega a z t^\omega =
			    x^\omega y (a b)^\omega z t^\omega}_\nem$.
	\item
	\label{thm:L_join_LI}
	    $\StVGen{\FMVL \FSVjoin \FSVLI}{\nem} = \StVEsntl\FMVL =
	     \StVidentities{x^\omega y b (a b)^\omega z t^\omega =
			    x^\omega y (a b)^\omega z t^\omega}_\nem$.
	\item
	\label{thm:J_join_LI}
	    $\StVGen{\FMVJ \FSVjoin \FSVLI}{\nem} = \StVEsntl\FMVJ =
	     \StVidentities{x^\omega y (a b)^\omega a z t^\omega =
			    x^\omega y (a b)^\omega z t^\omega,
			    x^\omega y b (ab)^\omega z t^\omega =
			    x^\omega y (a b)^\omega z t^\omega}_\nem$.
	\item
	\label{thm:H_join_LI}
	    $\StVGen{\FMVariety{H} \FSVjoin \FSVLI}{\nem} =
	     \StVEsntl\FMVariety{H}$
	    for any variety of groups $\FMVariety{H}$.
    \end{enumerate}
\end{theorem}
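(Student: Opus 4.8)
The overall plan is to feed the known identity presentations of $\FMVR$, $\FMVL$ and $\FMVJ$ into the two structural results of Sections~\ref{sec:Essentially-V} and~\ref{sec:Essentially-V_and_join_with_LI}, so that in each case only one thing remains to be checked: that the variety at hand satisfies criterion~\ref{ctn:variety_criterion}. Indeed, the right-most equalities in items~\ref{thm:R_join_LI}--\ref{thm:J_join_LI} are pure bookkeeping: since $\StVGen{\FMVR}{\allm} = \StVidentities{(ab)^\omega a = (ab)^\omega}_\nem$, Proposition~\ref{ptn:Essentially-V_identities} instantly gives $\StVEsntl{\FMVR} = \StVidentities{x^\omega y (ab)^\omega a z t^\omega = x^\omega y (ab)^\omega z t^\omega}_\nem$, and likewise $\StVEsntl{\FMVL}$ and $\StVEsntl{\FMVJ}$ are $\nem$-defined by the images under the operator of Proposition~\ref{ptn:Essentially-V_identities} of the identities $b(ab)^\omega = (ab)^\omega$ and of the pair $\{(ab)^\omega a = (ab)^\omega,\ b(ab)^\omega = (ab)^\omega\}$ respectively (for $\FMVJ$ one uses that $\StVidentities{E_1 \cup E_2}_\nem = \StVidentities{E_1}_\nem \cap \StVidentities{E_2}_\nem$). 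For the left-most equalities and for item~\ref{thm:H_join_LI}, Proposition~\ref{ptn:Essentially-V-V_join_LI} already supplies the inclusion $\StVGen{\V \FSVjoin \FSVLI}{\nem} \subseteq \StVEsntl{\V}$ for free, so in each of the four cases the task is reduced to verifying criterion~\ref{ctn:variety_criterion} for $\V \in \{\FMVR, \FMVL, \FMVJ\}$ and for an arbitrary variety of groups $\FMVariety{H}$; for this we use the sufficient condition noted just before Proposition~\ref{ptn:Variety_criterion} and, when that is too strong, Proposition~\ref{ptn:Variety_criterion} itself, working only with the classical language-theoretic descriptions of $\DLang{\FMVR}$, $\DLang{\FMVL}$, $\DLang{\FMVJ}$ and $\DLang{\FMVariety{H}}$.

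The group case is immediate: if $L \in \DLang{\FMVariety{H}}(\Sigma)$ has syntactic morphism $\varphi\colon \Sigma^* \to G$ with $G$ a finite group in $\FMVariety{H}$ and $L = \varphi^{-1}(F)$, then for any $x, y \in \Sigma^*$ the language $K = \varphi^{-1}\!\bigl(\varphi(x) F \varphi(y)\bigr)$ is recognised by $\varphi$, hence its syntactic monoid is a quotient of $G$ and lies in $\FMVariety{H}$, i.e.\ $K \in \DLang{\FMVariety{H}}(\Sigma)$; moreover $x^{-1} K y^{-1} = \{w : \varphi(x)\varphi(w)\varphi(y) \in \varphi(x) F \varphi(y)\} = \varphi^{-1}(F) = L$ after cancelling $\varphi(x)$ and $\varphi(y)$ in $G$. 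So the sufficient condition for criterion~\ref{ctn:variety_criterion}, and thus item~\ref{thm:H_join_LI}, holds.

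For $\FMVR$ one argues with the classical description of $\DLang{\FMVR}(\Sigma)$ as a Boolean algebra of ``deterministic'' products $\Sigma_0^* a_1 \Sigma_1^* \cdots a_m \Sigma_m^*$ (with $a_i \notin \Sigma_{i-1}$); see~\cite{Books/Pin-1986}. Since $x(\cdot)y$ commutes with Boolean operations modulo the $\FSVLI$-language $x\Sigma^* y$, it suffices to show $x L y \in \DLang{\FMVR \FSVjoin \FSVLI}(\Sigma)$ for $L$ a single such product. A short structural induction shows $\DLang{\FMVR}$ is closed under prepending a fixed word, so the point is to handle the appended suffix $y$: here one uses the $\FSVLI$ component to detect the suffix $y$ (and enough of the bounded prefix of the run), and the $\FMVR$ component to track the deterministic left-to-right run of $L$ up to the position where the $y$-block begins; recombining gives a Boolean combination of $\DLang{\FMVR}$- and $\DLang{\FSVLI}$-languages, as wanted. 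The case of $\FMVL$ then requires no new work: it follows from $\FMVR$ by the reversal/anti-isomorphism duality ($\FMVL$ being the image of $\FMVR$ under reversal, $\FSVLI$ being reversal-invariant, and $xLy$ reversing to $y^{\mathrm{rev}} L^{\mathrm{rev}} x^{\mathrm{rev}}$). Finally $\FMVJ$, which is self-dual, is treated directly through Proposition~\ref{ptn:Variety_criterion}: given a piecewise-testable $L$ of level $n$ and $x, y$, one fixes $k, l$ large and, for $u \in \Sigma^k$, $v \in \Sigma^l$, builds a piecewise-testable $K$ that replaces each subword condition defining $u^{-1}Lv^{-1}$ by that subword padded on the left and on the right with the (now saturated, hence fixed) subword content of $xu$ and $vy$; then $u^{-1}Lv^{-1} = (xu)^{-1}K(vy)^{-1}$.

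The genuinely delicate step — and the only place where the argument is more than a direct unwinding of definitions — is the explicit construction of the witnessing language $K$ (equivalently, the explicit Boolean decomposition of $xLy$) in the $\FMVR$ case for the appended suffix, and in the $\FMVJ$ case for the padded-subword language, together with the verification that these lie in the correct variety and realise the required quotient; the careful tracking of the deterministic run across the $x$-prefix and $y$-suffix (respectively of the subword content) is where all the bookkeeping concentrates. Everything else — the identity rewriting, the reduction to single generators, the reversal duality yielding $\FMVL$ from $\FMVR$, and the whole group case — is routine.
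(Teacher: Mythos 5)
Your overall skeleton is exactly the paper's: reduce each item to verifying criterion~\ref{ctn:variety_criterion} via Propositions~\ref{ptn:Essentially-V_identities} and~\ref{ptn:Essentially-V-V_join_LI}, treat the identity rewriting as bookkeeping, and check the criterion through quotient-expressibility using the classical language descriptions. Your group case is complete and is verbatim the paper's argument, and the reduction for $\FMVR$ to a single deterministic product together with the reversal duality yielding $\FMVL$ are both fine.

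The gap is that for $\FMVR$ (hence $\FMVL$) and for $\FMVJ$ you defer exactly the step that carries all the content, and for $\FMVR$ the hint you give points in a direction that cannot work as stated. You propose to let ``the $\FMVR$ component track the deterministic left-to-right run of $L$ up to the position where the $y$-block begins''; but that position is determined by the distance to the end of the word, which neither membership in an $\FMVR$-language (evaluated on the whole word) nor a bounded-suffix $\FSVLI$-test can locate, so no ``recombining'' of such languages implements this idea. The paper's construction is different and is the real point: write $y = z t$ with $z \in A_k^*$ and $t$ either empty or beginning with a letter outside $A_k$, note that $x A_0^* a_1 \cdots a_k A_k^* t$ is still a deterministic product (so still in $\DLang{\FMVR}(\Sigma)$), and correct the resulting over-approximation of $x L y$ by subtracting the finitely many languages $x A_0^* a_1 \cdots a_k v t$ with $v \in A_k^{< \length{z}}$; this exhibits $K \in \DLang{\FMVR}(\Sigma)$ with $L = x^{-1} K y^{-1}$, i.e.\ the strong sufficient condition, with no $\FSVLI$ component needed inside $K$ at all. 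Similarly for $\FMVJ$ you only gesture at a ``padded-subword'' $K$, and the appeal to large $k, l$ with ``saturated'' subword content of $x u$ is unclear (e.g.\ $u = a^k$ saturates nothing); the paper needs no such padding of the quotient and simply takes $K = \bigcup_{w \in L} [x w y]_{\sim_{\length{x y} + k}}$, verifying $L = x^{-1} K y^{-1}$ via the fact that $x u y$ is a subword of $x w y$ if and only if $u$ is a subword of $w$. Until these witnessing languages are actually constructed and verified, items~\ref{thm:R_join_LI}--\ref{thm:J_join_LI} are not proved.
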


\begin{proof}
    In each case, we prove that the variety of monoids under consideration
    satisfies criterion~\ref{ctn:variety_criterion} using
    Proposition~\ref{ptn:Variety_criterion}. We then use
    Propositions~\ref{ptn:Essentially-V-V_join_LI}~and~\ref{ptn:Essentially-V_identities}.

    \proofsubparagraph{Proof of~\ref{thm:R_join_LI}.}
    It is well-known that given an alphabet $\Sigma$, the set
    $\DLang{\FMVR}(\Sigma)$ consists of all languages that are disjoint unions
    of languages that are of the form $A_0^* a_1 A_1^* \cdots a_k A_k^*$ where
    $k \in \N$, $a_1, \ldots, a_k \in \Sigma$,
    $A_0, A_1, \ldots, A_k \subseteq \Sigma$ and $a_i \notin A_{i - 1}$ for
    all $i \in [k]$ (see~\cite[Chapter~4, Theorem~3.3]{Books/Pin-1986}).
    
    Let $\Sigma$ be an alphabet and take a language
    $A_0^* a_1 A_1^* \cdots a_k A_k^*$ where $k \in \N$,
    $a_1, \ldots, a_k \in \Sigma$, $A_0, A_1, \ldots, A_k \subseteq \Sigma$ and
    $a_i \notin A_{i - 1}$ for all $i \in [k]$.
    Take $x, y \in \Sigma^*$. Observe that $y$ can be uniquely written as
    $y = z t$ where $z \in A_k^*$ and
    $t \in \set{\emptyword} \cup (\Sigma \setminus A_k) \Sigma^*$.
    We have
    \[
	A_0^* a_1 A_1^* \cdots a_k A_k^*
	= x^{-1}
	  \Bigl(x A_0^* a_1 A_1^* \cdots a_k A_k^* t \cap
	        \bigcap_{v \in A_k^{< \length{z}}}
	        (\Sigma^* \setminus x A_0^* a_1 A_1^* \cdots a_k v t)\Bigr)
	  y^{-1}
    \]
    using the convention that $x A_0^* a_1 A_1^* \cdots a_k v t = x v t$ for all
    $v \in A_k^{< \length{z}}$ when $k = 0$.
    The language
    $x A_0^* a_1 A_1^* \cdots a_k A_k^* t \cap
     \bigcap_{v \in A_k^{< \length{z}}}
     (\Sigma^* \setminus x A_0^* a_1 A_1^* \cdots a_k v t)$
    does belong to the set $\DLang{\FMVR}(\Sigma)$ because the latter is closed
    under Boolean operations and by definition of $z$ and $t$.
    Thus, we can conclude that for each $L \in \DLang{\FMVR}(\Sigma)$ and
    $x, y \in \Sigma^*$, there exists $K \in \DLang{\FMVR}(\Sigma)$ such that
    $L = x^{-1} K y^{-1}$ by using the characterisation of
    $\DLang{\FMVR}(\Sigma)$, the fact that quotients commute with
    unions~\cite[p.~20]{Books/Pin-1986} and closure of $\DLang{\FMVR}(\Sigma)$
    under unions.

    \proofsubparagraph{Proof of~\ref{thm:L_join_LI}.}
    It is also well-known that given an alphabet $\Sigma$, the set
    $\DLang{\FMVL}(\Sigma)$ consists of all languages that are disjoint unions
    of languages that are of the form $A_0^* a_1 A_1^* \cdots a_k A_k^*$ where
    $k \in \N$, $a_1, \ldots, a_k \in \Sigma$,
    $A_0, A_1, \ldots, A_k \subseteq \Sigma$ and $a_i \notin A_i$ for all
    $i \in [k]$ (see~\cite[Chapter~4, Theorem~3.4]{Books/Pin-1986}).
    The proof is then dual to the previous case.

    \proofsubparagraph{Proof of~\ref{thm:J_join_LI}.}
    Given an alphabet $\Sigma$, for each $k \in \N$, we define the equivalence
    relation $\sim_k$ on $\Sigma^*$ by $u \sim_k v$ for $u, v \in \Sigma^*$
    whenever $u$ and $v$ have the same set of subwords of length at most $k$.
    This relation is a congruence of finite index on $\Sigma^*$.
    Simon proved~\cite{Simon-1975} that a language belongs to
    $\DLang{\FMVJ}(\Sigma)$ if and only it is equal to a union of
    $\sim_k$-classes for a $k \in \N$.

    Let $\Sigma$ be an alphabet and take $L \in \DLang{\FMVJ}(\Sigma)$ as well
    as $x, y \in \Sigma^*$. Thus, there exists $k \in \N$ such that $L$ is a
    union of $\sim_k$-classes. Define the language
    $K = \bigcup_{w \in L} [x w y]_{\sim_{\length{x y} + k}}$: it belongs to
    $\DLang{\FMVJ}(\Sigma)$ by construction.
    We now show that $L = x^{-1} K y^{-1}$, which concludes the proof.
    Let $w \in L$: we have that
    $x w y \in [x w y]_{\sim_{\length{x y} + k}} \subseteq K$, so that
    $w \in x^{-1} K y^{-1}$.
    Let conversely $w \in x^{-1} K y^{-1}$. This means that $x w y \in K$, which
    implies that there exists $w' \in L$ such that
    $x w y \sim_{\length{x y} + k} x w' y$. Actually, it holds that any
    $u \in \Sigma^*$ of length at most $k$ is a subword of $w$ if and only if it
    is a subword of $w'$, because $x u y$ is a subword of $x w y$ if and only if
    it is a subword of $x w' y$. Hence, $w \sim_k w'$, which implies that
    $w \in L$.

    \proofsubparagraph{Proof of~\ref{thm:H_join_LI}.}
    Consider any variety of groups $\FMVariety{H}$.
    Take a language $L \in \DLang{\FMVariety{H}}(\Sigma)$ for an alphabet
    $\Sigma$ and let $x, y \in \Sigma^*$.
    Consider the syntactic morphism $\eta\colon \Sigma^* \to M$ of $L$: we have
    that $M$ is a group in $\FMVariety{H}$. Define the language
    $K = \eta^{-1}\bigl(\eta(x) \eta(L) \eta(y)\bigr)$: it belongs to
    $\DLang{\FMVariety{H}}(\Sigma)$.
    We now show that $L = x^{-1} K y^{-1}$, which concludes the proof.
    Let $w \in L$: we have that $\eta(x w y) \in \eta(x) \eta(L) \eta(y)$, so
    that $w \in x^{-1} K y^{-1}$.
    Conversely, let $w \in x^{-1} K y^{-1}$. We have that $x w y \in K$, which
    means that $\eta(x w y) = \eta(x) \eta(w') \eta(y)$ for a $w' \in L$, so
    that $\eta(w) = \eta(w') \in \eta(L)$, as any element in $M$ is invertible.
    Thus, $w \in L$.
\end{proof}

\section{Conclusion}

The general method presented in this paper actually allows to reprove in a
straightforward language-theoretic way even more characterisations of the join
of $\FSVLI$ with some variety of finite monoids. This can for instance be done
for the variety of finite commutative monoids $\FMVCom$ or the variety of finite
commutative aperiodic monoids $\FMVACom$.

In fact, as already observed in some sense by Costa~\cite{Costa-2001}, many
varieties of finite monoids seem to verify
criterion~\ref{ctn:variety_criterion}. The main question left open by this
present work is to understand better what exactly those varieties are. Another
question left open is whether Proposition~\ref{ptn:Variety_criterion} can be
refined by using the stronger quotient-expressibility condition alluded to
before the statement of the proposition. The answers to both questions are
unclear to the author, but making progress on them may also lead to a better
understanding of joins of varieties of finite monoids with $\FSVLI$.

\bibliographystyle{plainurl}
\bibliography{Bibliography}

\appendix
\section{Missing proofs}

\begin{proof}[Proof of Proposition~\ref{ptn:V_join_LI_ne-variety}]
    Let $\W$ be an \nem-variety of stamps such that 
    $\StVGen{\V}{\allm} \cup \StVGen{\FSVLI}{\nem} \subseteq \W$.
    There exists a variety of semigroups $\FSVariety{W'}$ such that
    $\StVGen{\FSVariety{W'}}{\nem} = \W$.

    Let $S \in \V \cup \FSVLI$. We denote by $S^1$ the monoid $S$ if $S$ is
    already a monoid and the monoid $S \cup \set{1}$ otherwise. Then the
    evaluation morphism $\eta_S\colon S^* \to S^1$ such that $\eta_S(s) = s$ for
    all $s \in S$ verifies $\eta_S(S^+) = S$ and additionally $S^1 = S$ when
    $S \in \V$. This implies that
    $\eta_S \in \StVGen{\V}{\allm} \cup \StVGen{\FSVLI}{\nem} \subseteq \W$. But
    by definition of $\FSVariety{W'}$, it must be that
    $S = \eta_S(S^+) \in \FSVariety{W'}$.

    Therefore, $\FSVariety{W'}$ contains both $\V$ and $\FSVLI$, which implies
    that $\FSVVjoinLI \subseteq \FSVariety{W'}$ by inclusion-wise minimality of
    $\FSVVjoinLI$. By definition, we can then conclude that
    $\StVGen{\FSVVjoinLI}{\nem} \subseteq \StVGen{\FSVariety{W'}}{\nem} = \W$.
    So $\StVGen{\FSVVjoinLI}{\nem}$ is the inclusion-wise least \nem-variety of
    stamps containing both $\StVGen{\V}{\allm}$ and $\StVGen{\FSVLI}{\nem}$.

    \medskip

    Let now $\LVariety{W}$ be an \nem-variety of languages such that
    $\DLang{\V} \cup \DLang{\FSVLI} \subseteq \LVariety{W}$. It holds that
    $\LVariety{W} = \DLang{\W}$ for an \nem-variety of stamps $\W$.
    We have that $\StVGen{\V}{\allm}$, which is in particular an \nem-variety of
    stamps, is included in $\W$ because
    $\DLang{\StVGen{\V}{\allm}} = \DLang{\V} \subseteq \LVariety{W} =
     \DLang{\W}$,
    but also that $\StVGen{\FSVLI}{\nem}$ is included in $\W$ because
    $\DLang{\StVGen{\FSVLI}{\nem}} = \DLang{\FSVLI} \subseteq
     \LVariety{W} = \DLang{\W}$.
    By inclusion-wise minimality of $\StVGen{\FSVVjoinLI}{\nem}$, it follows
    that $\StVGen{\FSVVjoinLI}{\nem} \subseteq \W$. Hence, using again the above
    fact on the Eilenberg correspondence, we can conclude that
    $\DLang{\FSVVjoinLI} =
     \DLang{\StVGen{\FSVVjoinLI}{\nem}} \subseteq \DLang{\W} = \LVariety{W}$.
    So $\DLang{\FSVVjoinLI}$ is the inclusion-wise least \nem-variety of
    languages containing both $\DLang{\V}$ and $\DLang{\FSVLI}$.

    Consider now the class of languages $\LVariety{C}$ such that
    $\LVariety{C}(\Sigma)$ is the Boolean closure of
    $\DLang{\V}(\Sigma) \cup \DLang{\FSVLI}(\Sigma)$ for each alphabet $\Sigma$.
    By closure under Boolean operations of $\DLang{\FSVVjoinLI}$, we have that
    $\LVariety{C} \subseteq \DLang{\FSVVjoinLI}$.
    Now, as Boolean operations commute with both
    quotients~\cite[p.~20]{Books/Pin-1986} and inverses of
    \nem-morphisms~\cite[Proposition~0.4]{Books/Pin-1986}, by closure of
    $\DLang{\V}$ and $\DLang{\FSVLI}$ under quotients and inverses of
    \nem-morphisms, we actually have that $\LVariety{C}$ is an \nem-variety of
    languages. Therefore, by inclusion-wise minimality of $\DLang{\FSVVjoinLI}$,
    we can conclude that $\DLang{\FSVVjoinLI} = \LVariety{C}$.
\end{proof}

\begin{proof}[Proof of Lemma~\ref{lem:Join_is_essentially-V}]
    We actually have that
    $\DLang{\V} \cup \DLang{\FSVLI} \subseteq \DLang{\StVEsntl\V}$, which
    allows us to conclude by inclusion-wise minimality of
    $\DLang{\FSVVjoinLI}$ (Proposition~\ref{ptn:V_join_LI_ne-variety})
    and by the fact that $\DLang{\StVEsntl\V}$ is an \nem-variety of languages
    (Proposition~\ref{ptn:Essentially-V_identities}).

    Let $\Sigma$ be an alphabet.
    The fact that $\DLang{\V}(\Sigma) \subseteq \DLang{\StVEsntl\V}(\Sigma)$
    follows trivially from Proposition~\ref{ptn:Essentially-V_languages}.
    Moreover, for all $u \in \Sigma^*$, since necessarily
    $\Sigma^* \in \DLang{\V}(\Sigma)$, we have that both $u \Sigma^*$ and
    $\Sigma^* u$ belong to $\DLang{\FSVLI}(\Sigma)$. Thus, as
    $\DLang{\StVEsntl\V}(\Sigma)$ is closed under Boolean operations, it follows
    that $\DLang{\FSVLI}(\Sigma) \subseteq \DLang{\StVEsntl\V}(\Sigma)$.

    This concludes the proof, since it holds for any alphabet $\Sigma$.
\end{proof}

\begin{proof}[Proof of Lemma~\ref{lem:Conditional_essentially-V_is_join}]
    Assume that $\StVEsntl\V \subseteq \StVGen{\FSVVjoinLI}{\nem}$.
    For any $L \in \DLang{\V}(\Sigma)$ and any $x, y \in \Sigma^*$ with $\Sigma$
    an alphabet, by Proposition~\ref{ptn:Essentially-V_languages}, we have that
    $x L y \in \DLang{\StVEsntl\V}(\Sigma) \subseteq
     \DLang{\FSVVjoinLI}(\Sigma)$.
    Hence, $\V$ verifies criterion~\ref{ctn:variety_criterion}.

    Conversely, assume that $\V$ verifies criterion~\ref{ctn:variety_criterion}.
    For any alphabet $\Sigma$, the set $\DLang{\FSVVjoinLI}(\Sigma)$ contains
    all languages of the form $x L y$ for $L \in \DLang{\V}(\Sigma)$ and
    $x, y \in \Sigma^*$, so it contains all Boolean combinations of languages of
    that form, since it is closed under Boolean operations.
    Therefore, by Proposition~\ref{ptn:Essentially-V_languages}, we have
    $\DLang{\StVEsntl\V} \subseteq \DLang{\FSVVjoinLI}$, so that
    $\StVEsntl\V \subseteq \StVGen{\FSVVjoinLI}{\nem}$.
\end{proof}

\begin{proof}[Proof of Proposition~\ref{ptn:Variety_criterion}]
    Let us first observe that given any alphabet $\Sigma$, given any language
    $K$ on that alphabet and given any two words $x, y \in \Sigma^*$, we have
    that $x (x^{-1} K y^{-1}) y = x \Sigma^* y \cap K$ and
    $x^{-1} (x K y) y^{-1} = K$.

    \proofsubparagraph{Implication from right to left.}
    Assume that for any $L \in \DLang{\V}(\Sigma)$ and any $x, y \in \Sigma^*$
    with $\Sigma$ an alphabet, there exist $k, l \in \N$ such that for all
    $u \in \Sigma^k, v \in \Sigma^l$, there exists a language
    $K \in \DLang{\V}(\Sigma)$ verifying
    $u^{-1} L v^{-1} = (x u)^{-1} K (v y)^{-1}$.
    Take $L \in \DLang{\V}(\Sigma)$ for an alphabet $\Sigma$ and take
    $x, y \in \Sigma^*$. Consider also $k, l \in \N$ that are guaranteed to
    exist by the assumption we just made.

    For all $u \in \Sigma^k, v \in \Sigma^l$, there exists a language
    $K \in \DLang{\V}(\Sigma)$ verifying
    $u^{-1} L v^{-1} = (x u)^{-1} K (v y)^{-1}$, so that by our observation at
    the beginning of the proof, we have
    \[
	x (u \Sigma^* v \cap L) y = x u (u^{-1} L v^{-1}) v y =
	x u \bigl((x u)^{-1} K (v y)^{-1}\bigr) v y =
	x u \Sigma^* v y \cap K
	\displaypunct{.}
    \]
    Using Proposition~\ref{ptn:V_join_LI_ne-variety}, we thus have that
    $x (u \Sigma^* v \cap L) y \in \DLang{\V \FSVjoin \FSVLI}(\Sigma)$ for all
    $u \in \Sigma^k, v \in \Sigma^l$.
    Moreover, since we have that the set of words of $L$ of length at least
    $k + l$ is
    \[
	\Sigma^{\geq k + l} \cap L =
	\bigcup_{u \in \Sigma^k, v \in \Sigma^l} (u \Sigma^* v \cap L)
    \]
    and since
    \[
	L = (\Sigma^{\geq k + l} \cap L) \cup F
    \]
    where $F$ is a finite set of words on $\Sigma$ of length less than $k + l$,
    we have that
    \[
	x L y = x \bigl((\Sigma^{\geq k + l} \cap L) \cup F\bigr) y =
	\bigcup_{u \in \Sigma^k, v \in \Sigma^l} x (u \Sigma^* v \cap L) y \cup
	x F y
	\displaypunct{.}
    \]
    We can thus conclude that $x L y \in \DLang{\FSVVjoinLI}(\Sigma)$ since
    $x F y \in \DLang{\FSVLI}(\Sigma)$ and because $\DLang{\FSVVjoinLI}(\Sigma)$
    is closed under unions.

    \proofsubparagraph{Implication from left to right.}
    Assume that $\V$ satisfies criterion~\ref{ctn:variety_criterion}.
    Take $L \in \DLang{\V}(\Sigma)$ for an alphabet $\Sigma$ and take
    $x, y \in \Sigma^*$. By hypothesis, we know that
    $x L y \in \DLang{\FSVVjoinLI}(\Sigma)$.

    By Proposition~\ref{ptn:V_join_LI_ne-variety}, this means that $x L y$ is a
    Boolean combination of languages in
    $\DLang{\V}(\Sigma) \cup \DLang{\FSVLI}(\Sigma)$. Further, this implies that
    $x L y$ can be written as the union of intersections of languages of
    $\DLang{\V}(\Sigma)$ and $\DLang{\FSVLI}(\Sigma)$ or their complements,
    which in turn implies, by closure of $\DLang{\V}(\Sigma)$ and
    $\DLang{\FSVLI}(\Sigma)$ under Boolean operations, that $x L y$ can be
    written as a finite union of languages of the form
    $K \cap (U \Sigma^* V \cup W)$ with $K \in \DLang{\V}(\Sigma)$ and
    $U, V, W \subseteq \Sigma^*$ finite. Since any word in $x L y$ must be of
    length at least $\length{x y}$ and have $x$ as a prefix and $y$ as a suffix,
    we can assume that any language $K \cap (U \Sigma^* V \cup W)$ appearing in
    a finite union as described above verifies that $U \subseteq x \Sigma^*$,
    that $V \subseteq \Sigma^* y$ and that $W \subseteq x \Sigma^* y$.
    Now, if we take $k, l \in \N$ big enough, we thus have that
    \[
	x L y =
	\bigcup_{u \in \Sigma^k, v \in \Sigma^l}
	(K_{u, v} \cap x u \Sigma^* v y) \cup F
    \]
    where $K_{u, v} \in \DLang{\V}(\Sigma)$ for all
    $u \in \Sigma^k, v \in \Sigma^l$ and
    $F \subseteq \Sigma^{< \length{x y} + k + l}$.
    Hence, for all $u \in \Sigma^k, v \in \Sigma^l$, we have
    \begin{align*}
	u^{-1} L v^{-1}
	& = u^{-1} \bigl(x^{-1} (x L y) y^{-1}\bigr) v^{-1}\\
	& = (x u)^{-1}
	    \Bigl(\bigcup_{u' \in \Sigma^k, v' \in \Sigma^l}
		  (K_{u', v'} \cap x u' \Sigma^* v' y) \cup F\Bigr)
	    (v y)^{-1}\\
       	& = \begin{aligned}[t]
		& \bigcup_{u' \in \Sigma^k, v' \in \Sigma^l}
		  (x u)^{-1}
		  \Bigl(x u'
			\bigl((x u')^{-1} K_{u', v'} (v' y)^{-1}\bigr)
			v' y\Bigr)
		  (v y)^{-1} \cup\\
		& (x u)^{-1} F (v y)^{-1}
	    \end{aligned}\\
	& = (x u)^{-1} K_{u, v} (v y)^{-1}
	\displaypunct{,}
    \end{align*}
    using classical formulae for quotients~\cite[p.~20]{Books/Pin-1986} and
    observing that $(x u)^{-1} K (v y)^{-1} = \emptyset$ for any
    $K \subseteq \Sigma^*$ such that $K \cap x u \Sigma^* v y = \emptyset$.
\end{proof}
 
\end{document}